\newcommand{\todolater}[1]{}
\newcommand{\todoomit}[1]{}
\newcommand{\ballot}[3]{\left({#1}\,\middle \vert\,{#2}\,\middle \vert\, {#3}\right)}
\newcommand{\rpref}[2]{{\ensuremath{{\overleftarrow{\unrhd_{#2}}({#1})}}}}
\newcommand{\hgopd}{hedonic game with distance-based preferences} 
\newcommand{\hgopds}{hedonic games with distance-based preferences} 
\newcommand{\chgopds}{Hedonic games with distance-based preferences} 
\newcommand{\prefi}[1]{\unrhd_{#1}}
\newtheorem{theorem}{Theorem}
	\newtheorem{proposition}[theorem]{Proposition}
	\newtheorem{definition}[theorem]{Definition}
	\newtheorem{example}[theorem]{Example}
\newtheorem{observation}[theorem]{Observation}
\begin{document}
\title{FEN-Hedonic Games with Distance-Based Preferences}
%
%
\author{Anja Rey}\affil{Universit{\"a}t zu K{\"o}ln, Germany\\ \href{mailto:rey@cs.uni-koeln.de}{rey@cs.uni-koeln.de} \href{https://orcid.org/0000-0001-7387-6853}{https://orcid.org/0000-0001-7387-6853}}
\author{Lisa Rey\thanks{This work is partially supported by DFG-grant BA6270/1-1. We thank anonymous reviewers for helpful comments.}}\affil{Heinrich-Heine-Universit{\"a}t D{\"u}sseldorf, Germany\\ \href{mailto:lisa.rey@hhu.de}{lisa.rey@hhu.de}}
%
\date{}
\maketitle              
\begin{abstract}

Hedonic games formalize coalition formation scenarios where players evaluate an outcome based on the coalition they are contained in. Due to a large number of possible coalitions, compact representations of these games are crucial.
We complement known compact representation models by a distance-based approach:
Players' preferences are encoded in a bipolar manner by ordinal preferences over a small set of known neighbouring players, coalitions are represented by adequate preference orders from a player's perspective, and preferences over coalitions are extended based on a directed form of Hausdorff-Kendall-tau distance between individual preferences and coalitions.
We show that this model satisfies desirable axiomatic properties
and has reasonable computational complexity in terms of selected individual-based stability notions.

\end{abstract}
%
%
%
\section{Introduction}
When basing the decision about whom players share a task with on these players' preferences, a well-established model are hedonic games~\citep{ban-kon-soen:j:core-simple-coalition-formation-game,bog-jac:j:stability-hedonic-coalition-structures}.
\todoomit{
, formally introduced by
\citet{ban-kon-soen:j:core-simple-coalition-formation-game} and
\citet{bog-jac:j:stability-hedonic-coalition-structures}.
}%
As a form of a coalition formation game, players aim to partition into a so-called coalition structure. 
In a hedonic game the individual preferences over such partitions only depend on 
coalitions, i.e. sets of players, they belong to.
The
key
idea of this paper is to model a distance-based representation of hedonic games and study it from an axiomatic and computational point of view.

In general, the number of possible coalitions is exponential in the number of players. Hence, from an algorithmic point of view, it is relevant to find reasonable preference representations that are succinct but
also
expressive~\citep{cha-elk-woo:b:ccgt}.
In very large settings, especially, it is reasonable to assume that for each player the number of known players is small, e.g., bounded by a constant~\citep{pet:graphical-hg-bounded-tw:16,fic-kri-rey:c:hgpt:2019}.
\todolater{noch weitere neuere quellen?}%
Furthermore,
we assume a subdivision of these known co-players into those appreciated (\emph{friends}) and those disapproved of (\emph{enemies})
(see, e.g., \cite{dim-etal:hg-fe:06}).
A player prefers to share a coalition with their friends rather than their enemies.
\todoomit{
We do not, however, stop at counting the number of friends or enemies within a coalition but
}%
Additionally, we
follow the encoding proposed by~
\citet{lan-rey-rot-sch-sch:c:hgopt} in which each player ranks their friends and enemies, respectively. Their model of how
a player compares two coalitions
is based on the polarized responsive extension principle which we are going to discuss in Section~\ref{ch_axiomatic_analysis}.
This encoding allows the expression of a variety of opinions while at the same time players do not have to specify a detailed information, such as a numerical evaluation or knowledge of the whole player set.

The model at hand calculates the distance between a player's preference and a coalition which expresses the player's dissatisfaction with this coalition. While distance-based preferences play an important role in decision making,
preference extensions in hedonic games have not,
to the best of our knowledge, so far, been
based on ordinal distances.
%
We
attend to
the well-studied Kendall-tau distance
which
counts
the minimal necessary swaps to transform one preference into another. In order to allow for indifferences in a player's preference we
consider the Hausdorff-Kendall-tau distance. To this end a
comparable encoding of a preference and a coalition is necessary.
%
%
We show that this model meets desired properties
and compare it with other known hedonic game representations in terms of expressibility and computational complexity of stability problems.

\subsection{Related Work}
Hedonic games have been introduced by
\citet{ban-kon-soen:j:core-simple-coalition-formation-game} and
\citet{bog-jac:j:stability-hedonic-coalition-structures}
and
studied from an axiomatic and algorithmic point of view since (see, e.g.~\cite{cha-elk-woo:b:ccgt,azi-sav:b:handbook-hg}).
\todoomit{woe:c:hg-core:13
}
Applications can be found in particular instances, such as \emph{stable roommate problems}~\citep{stable-roommates,irv:j:stable-roommates}
and \emph{group activity selection problems}~\citep{DarGro18}.
Several
convenient
representations encode preferences over individual players which are then extended to preferences over coalitions. They include network approaches~\citep{dim-etal:hg-fe:06}, numerical approaches (such as additively separable encodings~\cite{bog-jac:j:stability-hedonic-coalition-structures}) and ordinal approaches (such as the singleton encoding~\cite{cec-rom:j:singleton-encoding}). More recently, encodings have been considered that do not assume that each player knows every other player in the game, but that they only know a subset of players and consider the others as neutral~\citep{ota-etal:hg-fen:17,pet:graphical-hg-bounded-tw:16,lan-rey-rot-sch-sch:c:hgopt}, sometimes referred to as FEN-hedonic games.
Hedonic games with ordinal preferences and thresholds~\citep{lan-rey-rot-sch-sch:c:hgopt,ker-lan-rey-rot-sch-sch:j:hgopt} combine trichotomous preferences with ordinal preferences.
In this paper we refer to this encoding.
So far, the extensions to coalitions comprise a set of possible and necessary extensions
as well as a numerical approach involving Borda scores~\citep{lan-rey-rot-sch-sch:c:hgopt}.
The additional requirement of a constant number of known players enables computational lower bounds to decrease and we cannot transfer proof techniques to our model immediately.
\citet{pet-elk:c-coopmas:simple-causes-complexity-hg} study causes of complexity for hedonic games. Unfortunately, the results are not applicable here.
We do, however, fulfil the requirements of a graphical hedonic game~\citep{pet:graphical-hg-bounded-tw:16}.
Consequently, several problems regarding stability of a game are fixed-parameter tractable with respect to treewidth and degree of the underlying dependency graph.
Note that social distance games~\citep{social-distance-games}
and distance hedonic games~\citep{fla-kod-ols-var:c:distance-hedonic-games}
are a different approach
concerning distances of players in a graph.
\todolater{neue aktuelle Arbeiten?}%

Preference extensions from ordinal preferences over single items to a preference order over subsets of items (see, e.g.~\cite{bar-bos-pat:b:ranking-sets-of-objects}) find applications in various topics within the field of \emph{Computational Social Choice}~\citep{bra-con-end-lan-pro:b:handbook-comsoc}.
Distance-based approaches in order to define these extensions are suggested in several contexts, but have not, to the best of our knowledge, yet been modelled for hedonic games.
%
Closely related to our research is the field of \emph{committee elections} (see, e.g.,~\cite{fal-sko-sli-tal:b:multiwinner-voting-trends-comsoc}) in which a subset of candidates, the so-called committee, is
elected
based on voters' preferences.
In this context
\citet{bra-kil-san:j:minimax-electing-committees} proposed to use distances
as closeness measures and established a minimax approach, i.e., the election winner minimizes the maximal distance between preferences and committee. As in our work, this method includes the necessity of representing the committee (as in our case the coalition) as well as the preference in the same form. Their approach, however, regards
\todoomit{
approval votes and thus preferences 
}%
votes
which solely express binary opinions (approval or disapproval) of candidates. Thereby, it suffices to make use of the Hamming distance.
This approach was extended to 
other forms of preferences, including linear orders
\citep{bau-den:c:voter-dissatisfaction-committee-elections}.
%
%
Similarly, in \emph{judgment aggregation} (see, e.g.~\cite{end:b:handbook-ja}),
aiming at
a collective judgment set out of individual judgments, distance-based procedures
measure the Hamming distance of a complete and consistent judgment set to the individual judgments and return the one minimizing the sum of
distances. 
\todoomit{
In \emph{resource allocation}~(see, e.g.~\cite{bou-che-mau:b:handbook-resource-allocation}) the extension of
ordinal
preferences over objects to preferences over subsets of objects also play a role.
}%

We refer to a number of axiomatic properties which are desirable for a hedonic game. These are in parts inspired by different fields of Computational Social Choice as described above and preference-based matching.
Fairness properties such as \emph{anonymity} and \emph{nonimposition} stem from a voting context.
\emph{Responsiveness} as introduced by
\citet{roth:j:responsive} in the context of many-to-one matching markets refers to a guaranteed benefit from replacing a former match by a match with a preferred item.
Basic properties for extensions to subsets of any size are defined by
\citet{bar-bos-pat:b:ranking-sets-of-objects} and
\citet{bos-sch:j:minimal-paths-on-ordered-graphs}, whose extension principle is 
defined for
polarized preferences by
\citet{lan-rey-rot-sch-sch:c:hgopt}.
\todoomit{Our extension is compatible with this polarized responsiveness extension principle.
}%



%
%
%





\subsection{Contribution}
We define the model of \emph{\hgopds} in Section~\ref{ch_model}:
In order to extend a player's opinion encoded by polarized ordinal preferences
over a subset of known players (3.1) to a preference order over coalitions which constitute the game (3.4),
we represent a coalition also as a preference order from the player's perspective (3.2), and compute a distance between these two preference orders (3.3).
We support this definition by an easily accessible characterization.
In Section~\ref{ch_background} we discuss why this particular coalition representation and distance
outperforms
other approaches.
In Section~\ref{ch_axiomatic_analysis} we analyse our model axiomatically: Desired properties
are fulfilled,
for instance, that a coalition cannot become less favourable when a preferred player enters the coalition.
We
observe this distance-based approach to be a reasonable and natural completion of existing compact representations which is at the same time flexible enough in terms of expressivity.
We study computational aspects of individual-based stability in Section~\ref{ch_stability}
and show that, e.g., Nash stability of a given outcome can be verified in linear time while it is NP-complete to decide whether there exists a Nash-stable outcome in a given game.

\section{Preliminaries}
Our work is based on the game-theoretic study of hedonic games as well as on distance measures which are outlined in the following two subsections.

\subsection{Hedonic Games}
We
consider
a class of cooperative games, where players want to
partition into subsets of players, so-called \emph{coalitions}.
A \emph{hedonic game}~\citep{bog-jac:j:stability-hedonic-coalition-structures,ban-kon-soen:j:core-simple-coalition-formation-game} is such a coalition formation game~$\langle N,\succeq\rangle$ with
a set of $n=\lvert N \rvert$ players
and a preference profile $\succeq=(\succeq_1,\dots,\succeq_n)$ such that each player~$i$'s preference $\succeq_i$ only depends on the coalitions $i$ is contained in.
The set of such coalitions is denoted by~$\mathcal{N}_i=\{C\subseteq N\mid i\in C\}$.
A partition $\Gamma$ of the players into subsets is called a \emph{coalition structure} and $\Gamma(i)$ denotes the coalition in $\Gamma$ containing player $i$. 

A common
goal of these games is to achieve stable coalition structures
no player or group of players has an intention to deviate from.
A coalition structure~$\Gamma$
is called \emph{Nash stable}, if there is no player~$i\in N$ who prefers another coalition~$C\cup\{i\}$ with $C\in \Gamma$ to $\Gamma(i)$.
If a deviation is restricted to coalitions where $i$ is welcome ($C\cup\{i\}\succeq_j C$ for each $j\in C$) and if there is no such deviation, $\Gamma$ is called \emph{individually stable}.
If additionally $i$ can only deviate if not bound to a contract in $i$'s former coalition ($\Gamma(i)\setminus\{i\}\succeq_j \Gamma(i)$ for each $j\in \Gamma(i)\setminus\{i\}$),
$\Gamma$ is called \emph{contractually individually stable}.
Moreover, $\Gamma$ is called \emph{perfect} if each $\Gamma(i)$ is one of player~$i$'s favourite coalitions ($\Gamma(i)\succeq_i C$ for each $C\in\mathcal{N}_i$).
%
For an analysis of stability, usually, two decision problems are distuinguished: the verification problem of whether a given coalition structure is stable, and the existence problem of whether a game allows a stable coalition structure.

In order to achieve succictness,
often a player's opinion is encoded as a preference order~$\unrhd$ over the set of players
which can be extended to a preference order~$\succeq$ over coalitions.
We consider $\unrhd$ to be an ordinal preference order with possible indifferences, i.e., a linear order (reflexive, transitive), that is not necessarily antisymmetric.
For two players~$x,y\in N$, $x$ is
\emph{weakly preferred to} $y$ if $x\unrhd y$;
$x$ is \emph{(strictly) preferred to} $y$ if $x\rhd y$ (i.e., $x\unrhd y$ and not $y\unrhd x$);
and $x$ and $y$ are considered \emph{indifferent} if $x\sim y$ (i.e., $x\unrhd y$ and $y\unrhd x$).
Similarly, for
$\succeq$,
we have
$A\succ B$ if and only if $A\succeq B$ and not $B\succeq A$.
By $\unrhd(C)$ we denote the restriction of $\unrhd$ to the players within $C\subseteq N$.
Let the \emph{rank} of a player~$j$ be the position among equally liked players within a preference order: $j$ is among the top-ranked players (of rank~$1$) if $j\unrhd k$ for each $k\in N$; $j$ is of rank~$p$ if
there exist $p-1$ other players $k_1,\dots,k_{p-1}$ such that $k_1\rhd\dots\rhd k_{p-1}\rhd j$, but no $p$ players $\ell_1,\dots,\ell_{p}$ such that $\ell_1\rhd\dots\rhd \ell_p\rhd j$.
Due to indifferences, several players can share the same rank, such that the maximal rank
is possibly smaller than $n$.

\subsection{Distance Measures}


We study
the distance between two preference orders, i.e., linear orders over the set of players. To be precise, in our setting a player's opinion is compared to a coalition, the former given in form of a 
  preference order over a subset of players;
  the latter interpreted as a preference order 
  from the player's perspective.
Before we go into detail in Section~\ref{ch_model}, we present background on
how to compare
preference orders in general.
%
 A \emph{distance measure} on a space $A$ is a metric $\mathrm{dist}: A
    \times A \rightarrow \mathbb{R}_{\geq 0}$, i.e.
    for each $a,b,c\in A$ it fulfils non-negativity ($\mathrm{dist}(a,b) \geq 0$), identity of indiscernibles ($\mathrm{dist}(a,b)=0 \iff a=b$), symmetry ($\mathrm{dist}(a,b)=\mathrm{dist}(b,a)$), and the triangle inequality ($\mathrm{dist}(a,b) + \mathrm{dist}(b,c) \geq \mathrm{dist}(a,c)$).
 A directed distance is a quasimetric which does not fulfil symmetry, or a pseudoquasimetric, if it additionally does not fulfil identity of indiscernibles.
    

     
    For strict 
    preference orders, that is, antisymmetric linear orders, a well-known distance metric is
    the Kendall-tau distance. 
   Based on Kendall's measure of rank correlation~\citep{ken:j:a-new-measure-of-rank-correlation}, it
   calculates the minimal number of inversions of adjacent players necessary to convert one
     strict preference order~$\rhd^\alpha$ into another, $\rhd^\beta$:\\[-2ex]
     \[\tau\left(\rhd^\alpha,\rhd^\beta\right) = \lvert\{(x,y) \in N \times N\ \mid \ x \rhd^{\alpha} y \textnormal{ and } y \rhd^{\beta} x\} \rvert.\]
     
    
     
    In our context, preference orders allow indifferences (i.e., are not necessarily antisymmetric).
    Any such order~$\unrhd$ can be interpreted as an equivalence class containing all strict preferences orders~$\rhd^\alpha$ which are consistent with $\unrhd$ (i.e., $x\rhd^\alpha y$ if $x\rhd y$)
    and each indifference is resolved by some permutation (either $x\rhd^\alpha y$ or $y\rhd^\alpha x$, if $x\unrhd y$ and $y\unrhd x$).
    In order to include indifferences, we consider a Hausdorff distance~\citep{hau:b:mengenlehre}.
The Hausdoff--Kendall-tau distance is defined by\\[-1ex]
    \[\tau^*(\unrhd^\pi,\unrhd^\sigma) = \max\left\{ \max_{\rhd^\beta \in \unrhd^\sigma} \min_{\rhd^\alpha \in \unrhd^\pi} \tau\left(\rhd^\alpha, \rhd^\beta\right), \max_{\rhd^\alpha \in \unrhd^\pi} \min_{\beta \in \unrhd^\sigma} \tau\left(\rhd^\alpha, \rhd^\beta\right)\right\}\]
    for two preference orders $\unrhd^\pi$ and $\unrhd^\sigma$ interpreted as equivalence classes of strict preferences
    with maximal ranks~$r_\pi$ and~$r_\sigma$, respectively.
    %
\todoomit{
    This can be characterized (see, e.g.,\cite{cri:b:metric-ranked-data}) by
    \[\tau^*(\unrhd^\pi,\unrhd^\sigma) = \max\left\{
     \sum_{\substack{p,p'\in\{1,\dots,r_{\pi}\}\\q,q'\in\{1,\dots,r_{\sigma}\}\\p<p', q\geq q'}}n_{pq}\cdot n_{p'q'},
     \sum_{\substack{p,p'\in\{1,\dots,r_{\pi}\}\\q,q'\in\{1,\dots,r_{\sigma}\}\\p\leq p', q> q'}}n_{pq}\cdot n_{p'q'}
     \right\}
    \]
}%
%
    Intuitively, for the first part,
    a \emph{worst-case} resolution~$\rhd^\beta$ of $\unrhd^\sigma$ is chosen for which the 
    the minimal number of swaps between some $\rhd^\alpha$ resolving $\pi$ and $\rhd^\beta$ is
    maximized.
  %
  %
    For our interpretation of a coalition depending on a player's perspective, we only consider this first part; see Section~\ref{ch_model} for details.
    We denote this
    directed distance
    by
    \begin{align}
      \overrightarrow{\tau}\left(\unrhd^\pi,\unrhd^\sigma\right) = \max_{\rhd^\beta \in \unrhd^\sigma} \min_{\rhd^\alpha \in \unrhd^\pi} \tau\left(\rhd^\alpha, \rhd^\beta\right)
      =\sum_{\substack{p,p'\in\{1,\dots,r_{\pi}\}, p<p'\\
      q,q'\in\{1,\dots,r_{\sigma}\}, q\geq q'
      }}n_{pq}\cdot n_{p'q'},
      \label{eq:char-sum}
    \end{align}
    where the latter is a characterization
    (see, e.g.,\cite{cri:b:metric-ranked-data})
    with
    the number $n_{pq}$ of items that are
    ranked on position $p$ in $\unrhd^{\pi}$
    and on position $q$ in $\unrhd^{\sigma}$.
    While in general $\overrightarrow{\tau}$ is a pseudoquasimetric,
    the directed distance
    as defined in Section~\ref{subsec:delta} allows a quasimetric. See Section~\ref{ch_axiomatic_analysis} for details.
    This matches the Hausdorff-Kendall-tau metric via
 %
  $  \tau^*(\unrhd^\pi,\unrhd^\sigma) = \max\{\overrightarrow{\tau}\left(\unrhd^\pi,\unrhd^\sigma\right),\overrightarrow{\tau}\left(\unrhd^\sigma,\unrhd^\pi\right) \}$.




\section{The Model}
\label{ch_model}
This section presents a compactly encoded model of hedonic games which yet gives credit to each player's detailed opinion on a subset of candidates. The preference extension is based on calculating the Kendall-tau distances between a player's preference and  coalitions containing this player. Thus, a player compares two coalitions by ranking the coalition with minimal distance, i.e., the minimal dissatisfaction, highest. 
%
We create a \emph{\hgopd} as follows:
Given a player's ordinal preferences over the
known players,
we define
\begin{itemize}\setlength\itemindent{-1em}
    \item[](3.2) a preference-based encoding of a coalition from a player's perspective and
    \item[](3.3) a distance between preference and coalition from a player's perspective
\end{itemize}
which determines the extension to preferences over coalitions for each player.

\subsection{Preference Encoding}

  We define the preference encoding similar to 
  \emph{hedonic games with ordinal preferences and 
  thresholds}~\citep{lan-rey-rot-sch-sch:c:hgopt}. For each player~$i$ the set of other players is partitioned into \emph{accepted} players~$N^+_i$ (\emph{friends}),
  \emph{unaccepted} players~$N^-_i$ (\emph{enemies}),
  and neutral (e.g. unknown) players~$N^0_i$.
    This can be represented by a directed graph, the underlying \emph{dependency graph} with two edge labels $+$ and $-$.
    By $N_i=N^+_i\cup N^-_i$ we denote $i$'s neighbourhood
    and
    assume that $\lvert N_i\rvert=\lvert N^+_i\cup N^-_i\rvert\leq d$ for a constant $d>0$. 
    Additionally, each player~$i$ specifies a partial preference relation $\unrhd_i(N^+_i)$ over $N^+_i$ and
    a partial preference relation $\unrhd_i(N^-_i)$ over $N^-_i$ such that a ballot contains information of 
    the form $\ballot{\unrhd_i(N^+_i)}{(N^0_i \cup \{i\})_{\sim}}{\unrhd_i(N^-_i)}$, where $S_{\sim}$ for any $S \subseteq N$ denotes that all players in $S$ are 
    ranked equally.
    This induces a preference relation $\unrhd_i$: For players $j,k\in N\setminus\{i\}$ it holds that $j\unrhd_i k$ if and only if
   $j,k\in N^+_i$ and $j\unrhd_i k$; $j\in N^+_i$ and $k\in N^0_i$; $j\in N^+_i$ and $k\in N^-_i$; $j\in N^0_i$ and $k\in N^-_i$; or $j,k\in N^-_i$ and $j\unrhd_i k$.
   In order to obtain our final preference encoding we 
   conduct two steps. Firstly, since the neutral players in 
   $N^0_i$ do not have any effect on the evaluation 
   of a coalition we abbreviate the ballot notation by
   $\ballot{\unrhd_i(N^+_i)}{i}{\unrhd_i(N^-_i)}$. Secondly, in order to facilitate a comparison with a coalition, we subdivide the ballot into a part regarding player $i$'s friends
\todoomit{   
   \[\unrhd^+_i = \ballot{\unrhd_i(N^+_i)}{i}{}\] and 
   $i$'s enemies 
   \[\unrhd^-_i = \ballot{}{i}{\unrhd_i(N^+_i)}.\]
}
and enemies, respectively,\\[-2ex]
  \[
   \unrhd^+_i = \ballot{\unrhd_i(N^+_i)}{i}{ },\quad
   \unrhd^-_i = \ballot{}{i}{\unrhd_i(N^+_i)}.
  \]
  This again induces preference orders, namely $\unrhd^+_i=\unrhd_i(N^+_i\cup\{i\})$ and $\unrhd^-_i=\unrhd_i(N^-_i\cup\{i\})$ in which $j \rhd^+_i i$  for all $j \in N^+_i$ ($i \rhd^-_i j$ for all
   $j \in N^-_i$) complements $\unrhd_i(N^+_i)$ ($\unrhd_i(N^-_i)$, respectively).
%
  \begin{example}\label{ex:running-example-1}
   For instance, let $i$ know five other players, $a$, $b$, $c$, $d$, and $e$, three of which $i$ likes ($a$, $b$, $c$) and two of which $i$ doesn't like ($d$, $e$), and let $i$ specify preferences with whom to cooperate by
   $\ballot{a\rhd_i b\sim_i c}{i}{d\rhd_i e}$.   
  \end{example}

\subsection{Coalition Encoding}
  In order to compare a player $i$'s ballot with a coalition $C \in \mathcal{N}_i$, we interpret $C$ from 
  $i$'s perspective.
  Intuitively, 
  friends in $C$ and absent enemies are ranked according to 
  $i$'s opinion, whereas missing good friends and present unfavourable enemies have a reversed impact.
  By \rpref{S}{i}  we denote player $i$'s reversed preference order over all players in 
  $S \subseteq N$ such that the separate representation is defined matching the form of two ballots\\[-3.4ex]
  \begin{align*}
  C^+_i &= \ballot{\unrhd_i(C \cap N^+_i)}{i}{\rpref{N^+_i \setminus C}{i}},\quad
  C^{-}_{i} =  \ballot{\rpref{C \cap N^-_i}{i}}{i}
  {\unrhd_i(N^-_i \setminus C)}.
\end{align*}
Note that this is again the abbreviated notion, omitting
neutral players which are implicitly ranked equally 
to player~$i$ without influencing player~$i$'s preference order 
over coalitions. By this
we allow 
a partial ballot to
be compared to
a partial coalition
containing the identical subset of players.

\begin{example}\label{ex:running-example-2}
  Continuing Example~\ref{ex:running-example-1}, from $i$'s perspective, we can now encode coalitions contained in $\mathcal{N}_i$.
  For a coalition $C=\{i,a,b,e\}$ (and any combination with further neutral players), we obtain $C_i^+=\ballot{a\rhd_i b}{i}{c}$ and $C_i^-=\ballot{e}{i}{d}$. For $D=\{i,c,d,e\}$, we have $D_i^+=\ballot{c}{i}{b\rhd_i a}$ and $D_i^-=\ballot{e\rhd_i d}{i}{}$. The coalition containing all neighbouring players, $E=\{i,a,b,c,d,e\}$ is encoded by $E_i^+=\ballot{a\rhd_i b\sim_i c}{i}{}$ and $E_i^-=\ballot{e\rhd_i d}{i}{}$; $F=\{i,a,b,c\}$ containing all friends, by $F_i^+=\ballot{a\rhd_i b\sim c}{i}{}$ and $F_i^-=\ballot{}{i}{d \rhd_i e}$.
\end{example}

\subsection{Distance Between Preference and Coalition}
\label{subsec:delta}
Given representations of both
a player's preferences
and a coalition, we calculate their distance by summing up the separate distances regarding all friends and all enemies.
Our model defines this distance between
$\unrhd_{i}$ and
$C\in\mathcal{N}_i$ by
\[\delta(\unrhd_{i}, C) = \delta^+(\unrhd_{i}, C) 
+ \delta^-(\unrhd_{i}, C)\] with
\[\delta^+(\unrhd_{i}, C) = \overrightarrow{\tau}\left(\unrhd^+_{i}, C_i^+\right) \textnormal{ and } \delta^-(\unrhd_{i}, C) = \overrightarrow{\tau}\left(\unrhd^-_{i}, C_i^-\right),\]
for which $\unrhd^+_{i}$ and $C_i^+$ ($\unrhd^-_{i}$ and $C_i^-$) induce, as defined, equivalence classes containing all
antisymmetric linear orders over $N^+_{i} \cup \{i\}$
($N^-_{i} \cup \{i\}$) dissolving occurring indifferences in the ballots.


Note that we consider the directed distance $\overrightarrow{\tau}$ here, since we have a dependence
between the coalition representation and the preferences of a player.
Considering the worst case of possible swaps within indifferences in the preference encoding, would contradict this intuition.
In fact, it always holds that $\overrightarrow{\tau}\left(\unrhd^+_{i}, C_i^+\right)\leq \overrightarrow{\tau}\left( C_i^+,\unrhd^+_{i}\right)$.
%
Furthermore, note that $\delta$ could be any other function in $\delta^+$ and $\delta^-$, for instance the Euclidean distance. If not stated otherwise, we consider the 1-norm in this paper.

\begin{example}\label{ex:running-example-3}
 For the instance in Examples~\ref{ex:running-example-1} and~\ref{ex:running-example-2}, we now obtain the distances from $i$'s preferences to the coalitions by 
 $\delta^+(\unrhd_i,C)=1$, $\delta^-(\unrhd_i,C)=2$, thus, $\delta(\unrhd_i,C)=3$;
 $\delta(\unrhd_i,D)=4+3=7$;
 $\delta(\unrhd_i,E)=0+3=3$; and
 $\delta(\unrhd_i,F)=0$.
\end{example}

We can characterize
this distance as follows.
\begin{proposition}\label{prop:char-delta}
 It holds that
 \begin{align}
   \delta^+(\unrhd_i,C)= \lvert N_i^+\setminus C\rvert + \sum_{f\in N_i^+\setminus C} \left\lvert \left\{b\in N_i^+ \mid f\rhd_i b \right\} \right\rvert
   \label{eq:char-deltaplus}
 \\[-2ex]
 \intertext{and}
   \delta^-(\unrhd_i,C)= \lvert C\cap N_i^-\rvert
   + \sum_{f\in N_i^-} \left\lvert \left\{b\in C\cap N_i^- \mid f\rhd_i b \right\} \right\rvert.
   \label{eq:char-deltaminus}
 \end{align}
\end{proposition}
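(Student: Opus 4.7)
The plan is to apply the pair-counting characterization~(\ref{eq:char-sum}) of $\overrightarrow{\tau}$ and then enumerate the contributing pairs according to the structural form of the ballots $\unrhd_i^+, \unrhd_i^-$ and the coalition encodings $C_i^+, C_i^-$.

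Recall that (\ref{eq:char-sum}) counts exactly the ordered pairs $(x,y)$ such that $x\rhd y$ strictly in the left-hand order while $y\unrhd x$ weakly in the right-hand order. So I would first rewrite
\[
\delta^+(\unrhd_i,C)=\bigl|\{(x,y)\in(N_i^+\cup\{i\})^2 \mid x\rhd_i^{+} y \text{ and } y\unrhd^{C_i^+} x\}\bigr|,
\]
and analogously for $\delta^-$. Because in $\unrhd_i^+$ every friend is strictly above $i$ and because ties occur only among friends, the strict pairs split into two disjoint types: (a) pairs $(f,i)$ with $f\in N_i^+$, and (b) pairs $(f,g)$ with $f,g\in N_i^+$ and $f\rhd_i g$ (strictly).

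For type~(a), I would read off $C_i^+=\ballot{\unrhd_i(C\cap N_i^+)}{i}{\rpref{N_i^+\setminus C}{i}}$ to see that $i\unrhd^{C_i^+}f$ holds iff $f$ sits in the bottom block, i.e.\ $f\in N_i^+\setminus C$; hence type~(a) contributes exactly $|N_i^+\setminus C|$, matching the first summand of~(\ref{eq:char-deltaplus}). For type~(b) I would go through the four subcases according to whether $f,g$ lie in $C$ or not: if $f\in C$ the condition $g\unrhd^{C_i^+} f$ fails (either $f\rhd_i g$ survives in the top block, or $f$ is above $i$ and $g$ below it); if $f\notin C$ the condition holds in both remaining subcases, because the reversal of $\unrhd_i$ on $N_i^+\setminus C$ places $g$ above $f$ whenever $f\rhd_i g$, and of course $g\in C$ puts $g$ above the bottom block altogether. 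Summing over $f\in N_i^+\setminus C$ and the set $\{g\in N_i^+\mid f\rhd_i g\}$ then yields the second summand of~(\ref{eq:char-deltaplus}).

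The argument for~(\ref{eq:char-deltaminus}) is completely parallel but mirrored: in $\unrhd_i^-$ the player $i$ is strictly above every enemy, so type~(a) pairs are $(i,e)$ with $e\in N_i^-$, and $e\unrhd^{C_i^-} i$ holds precisely when $e$ lies in the top block of $C_i^-=\ballot{\rpref{C\cap N_i^-}{i}}{i}{\unrhd_i(N_i^-\setminus C)}$, i.e.\ $e\in C\cap N_i^-$, contributing $|C\cap N_i^-|$. For type~(b) pairs $(f,g)$ with $f\rhd_i g$ in $N_i^-$, a subcase analysis shows that $g\unrhd^{C_i^-} f$ holds iff $g\in C\cap N_i^-$: if $g\in C$ then $g$ is in the (reversed) top block and always beats $f$; if $g\notin C$ then the bottom block preserves $\unrhd_i$, so $f\rhd g$ in $C_i^-$. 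Re-indexing gives the double sum in the stated form.

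The only real care is the bookkeeping in type~(b), in particular remembering that (i)~indifferent pairs in $\unrhd_i^{\pm}$ contribute nothing because the sum in~(\ref{eq:char-sum}) is over $p<p'$ strictly, and (ii)~the reversal operator $\rpref{\cdot}{i}$ flips the $\unrhd_i$-order within the affected block, which is precisely what makes the condition $g\unrhd^{C_i^+} f$ hold on the bottom block when $f\rhd_i g$. Both characterizations then follow immediately by combining the type~(a) and type~(b) counts.
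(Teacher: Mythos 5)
Your proof is correct and takes essentially the same route as the paper: both unpack the characterization in Equation~\eqref{eq:char-sum} of the directed Kendall-tau distance and then case-analyze contributions by whether a friend (resp.\ enemy) lies inside or outside $C$, with your type~(a)/(b) split corresponding exactly to the paper's accounting of the $i$-marker and of the players shifted out of (into) the coalition in Figure~\ref{fig:char-dist}. The difference is only presentational: you count ordered pairs $(x,y)$ directly, which makes the bookkeeping more explicit and self-contained than the paper's diagrammatic reading of the rank-matrix entries $n_{pq}$.
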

\begin{figure}[t]
  \centering
   \begin{minipage}{0.45\textwidth}
    \centering
    \def\svgwidth{7.3cm}
    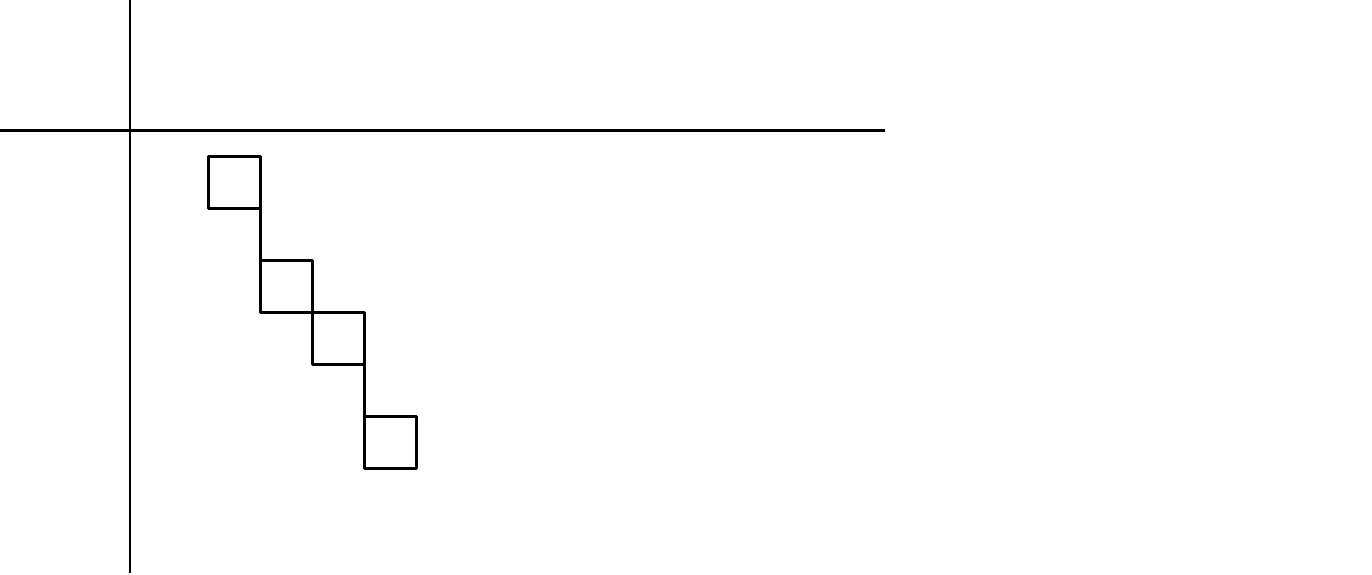
   \end{minipage}
   \begin{minipage}{0.45\textwidth}
    \centering
    \def\svgwidth{5.8cm}
    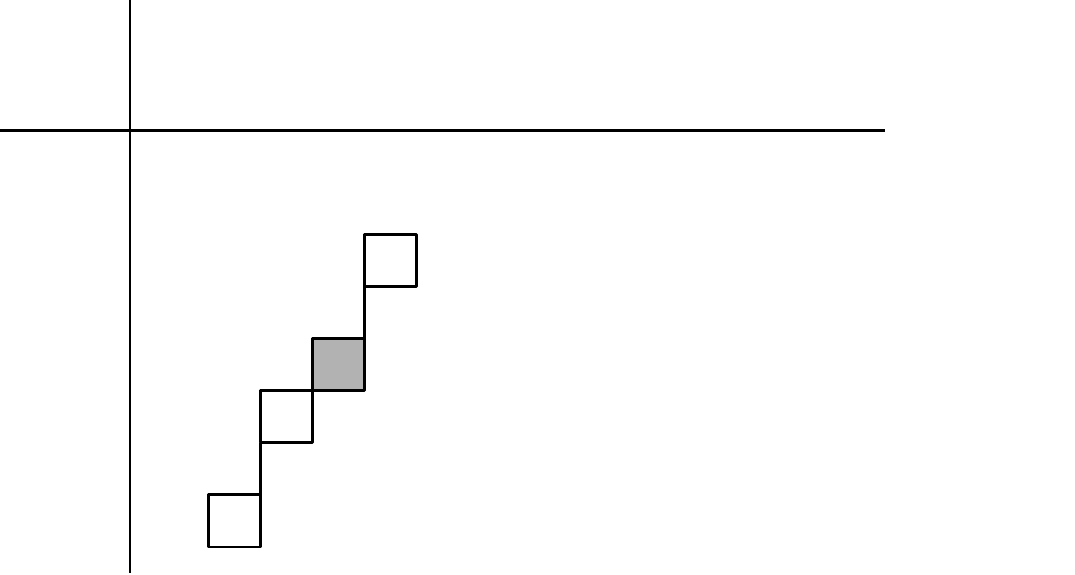
   \end{minipage}    
    \caption{Characterization of $\delta^+(\unrhd_i,C)$ and $\delta^-(\unrhd_i,C)$}
    \label{fig:char-dist}
\end{figure}
%
Figure~\ref{fig:char-dist} depicts
the characterization of $\delta^+$ (Equation~\eqref{eq:char-sum}, see also~\cite{cri:b:metric-ranked-data})
on the left hand side.
Note that each row contains at most two non-zero entries, since all equally ranked players are either positioned in the same rank within or not within the coalition.
Moreover, 
each column only has one non-zero entry, since each player in the same rank in the coalition (and outside the coalition) origins from a common rank in the preference.
In the bottom row, $i$ marks a fixed position in both orders.
We sum up the following parts.\\[-4ex]
\begin{itemize}
    \item For each $f\in C\cap N_i^+$ shifting from rank $p$ in the preference to $q$ in the coalition,  there is no non-zero entry $n_{p'q'}$ with $p'>p$ and $q'\leq q$.
    \item Each $f\in N_i^+\setminus C$ shifting from rank $p$ in the preference to $q$ outside the coalition is multiplied with the entries in the rectangle below and left of $p$ and $q$. These are non-zero for (a) each $b$ shifting from $p'<p$ (i.e., $f$ is preferred to $b$) to any $q'$ (both inside and outside the coalition) and (b) the $i$-marker in the bottom row.
\end{itemize}
This sums up to Equation~\eqref{eq:char-deltaplus}.
Similarly, for $\delta^-$, Equation~\eqref{eq:char-deltaminus} is obtained, see also Figure~\ref{fig:char-dist}, right hand side.

\subsection{Preference Extension}
Now, we make use of this distance notion in order to define our game's preference extension.
A player~$i\in N$ weakly prefers a coalition $A\in \mathcal{N}_i$ to a coalition $B\in \mathcal{N}_i$
if and only if $A$ is at most as far from $i's$ preference order as $B$. Formally,
\[
  A\succeq_i B\iff \delta(\prefi{i},A) \leq \delta(\prefi{i},B).
\]

\begin{example}\label{ex:running-example-4}
 Finally, for player~$i$ in Examples~\ref{ex:running-example-1},~\ref{ex:running-example-2}, and~\ref{ex:running-example-3}, we obtain the following preferences over the example calitions:
 $F$ is $i$'s favourite coalition with distance~$0$.
 Coalition $C$ is preferred to $D$ ($3<7$) which is also reasonable since $C$ contains more and better friends, while $D$ has an additional enemy. Coalition $E$ cannot be compared to $C$ that easily with more friends, but also more enemies. The distance, however, is the same for both $C$ and $E$. Therefore, $C\succeq_i E$ and $E\succeq_i C$.
\end{example}

Hence, with $\succeq=(\succeq_1,\dots,\succeq_n)$ we obtain a
\emph{\hgopd} $\langle N,\succeq\rangle$.
In Section~\ref{ch_axiomatic_analysis} we show that this model satisfies desired axiomatic properties.

\section{The Model's Background}
\label{ch_background}
In this section we argue the steps toward the above defined
model
to achieve our aim of finding a Kendall-tau-based preference extension.
For a Hausdorff--Kendall-tau approach
extending player~$i$'s polarized ordinal preference encoding $\ballot{\unrhd_i(N^+_i)}{i}{\unrhd_i(N^-_i)}$, we rule out the below described alternative representations.
%
\todoomit{A player's satisfaction with a coalition is higher when their most preferred players are members of the coalition while their least preferred players are not.}%
As a starting point
%
we
consider a
representation of coalition~$C$
where players within and outside of $C$ remain unranked,
such as
%
$\ballot{(C \cap N_i)_{\sim}}{i}{(N_i \setminus C)_{\sim}}$.
This way the number of necessary swaps is minimal for player $i$'s best friend and maximal for their least preferred player.
Strongly opposing the advantage of a coalition
independent of the player's preference is the following example which shows that adding a friend to a coalition can make the coalition worse. 

\begin{example}
   \label{ex_model_background_version1}
   Consider player $i$'s preference $\unrhd_{i} = 
   \ballot{a \rhd b \rhd c \rhd f}{i}{}$.
   Coalition $A = \{b,c,i\}$ is
   preferred to coalition $A \cup \{f\}$.
\end{example}

\todoomit{
The property that adding a friend to a coalition should not increase the player's dissatisfaction with this coalition will be discussed in Section~\ref{ch_axiomatic_analysis} and is considered necessary.
}%
Discarding equivalences between all candidates within
and outside
the coalition leads to the adjusted representation $C = \ballot{\unrhd_{i}(C \cap N^+_i)}{i}{\unrhd_{i}(N^-_i \setminus C)}$. Two different problems arise in this setting. Firstly, still, adding a friend to a coalition can increase player $i$'s dissatisfaction
as Example~\ref{ex_model_background_version2} shows. Note that reversing the order of players
outside the coalition does not change this. Secondly, this ordering contains friends and enemies alike, thus weighting friends higher than enemies, a constraint not going in line with the idea of the model.

\begin{example}
   \label{ex_model_background_version2}
  Again consider
  $\unrhd_{i} = 
   \ballot{a \rhd b \rhd c \rhd f}{i}{}$.
   It holds that $\{i\}\succ_i \{i,f\}$.
\end{example}

Resulting from this, we consider distances separately for friends and enemies.
We subdivide the preference ballot $\unrhd_{i}$ in the same manner
in order to compare ballots which contain the exact same set of players.
Keeping the previous coalition representation intact,
a subdivision would lead to the coalition ballots 
$C^+_i = \ballot{\unrhd^+_{i}(C \cap N^+_i)}{i}{\unrhd^+_{i}(N^+_i \setminus C)}$ and $C^-_i = \ballot{\unrhd^-_{i}(C \cap N^-_i)}{i}{\unrhd^-_{i}(N^-_i \setminus C)}$. Example~\ref{ex_model_background_version3} shows that adding a friend again increases the distance.

\begin{example}
    \label{ex_model_background_version3}
   For $\unrhd_{i} = 
   \ballot{a \rhd b \rhd c \rhd f}{i}{}$,
   $i$
   prefers $\{a,i\}$ to $\{a,f,i\}$.
\end{example}

As a last step this postulates that the order of those players who are not in the desired positions is reversed in the coalition, i.e., the order of all friends who are not within the coalition and of all enemies who are within the coalition is reversed.
At this point we work with the representations presented in Section~\ref{ch_model} for which we show
desirable properties
in Section~\ref{ch_axiomatic_analysis}

Apart from the representation of preferences and coalitions we make use of a directed distance measure. Next to the above discussed argument that the coalition is represented in relation to the preference,
an axiomatic analysis underlines the importance of basing the model on a directed version of the 
Hausdorff-Kendall-tau distance. 
Example~\ref{ex_model_background_yellow_sum} shows, that again adding a friend to a coalition can make a coalition worse when considering the undirected distance.

\begin{example}
   \label{ex_model_background_yellow_sum}
   
   Consider player $i$'s preference $\unrhd_{i} = 
\ballot{a \sim_{i} b \sim_{i} c \sim_{i} d \sim_{i} f}{i}{}$. Then, player~$i$
prefers 
coalition $\{a, i\}$
to $\{a,f,i\}$ when calculating the distance via the (undirected) Hausdorff-Kendall-tau distance.
\end{example}





This leads us to both the representation of coalitions and preferences defined in Section~\ref{ch_model} as well as to the use of a directed distance.

\section{Axiomatic Analysis}
\label{ch_axiomatic_analysis}
In the following, we study properties of the directed distance $\delta$ between a player's preference and a coalition as well as the preference extension. Since these properties are from one player~$i$'s perspective, we omit the index~$i$ in this section.\todolater{verify whether this remains valid}
\todolater{gewichtung theoreme}

By definition of $\overrightarrow{\tau}$, $\delta$ naturally satisfies non-negativity and the triangle inequality.
In general, the directed distance $\overrightarrow{\tau}$ does not satisfy the identity of indiscernibles.
For instance, it holds that $\overrightarrow{\tau}(a\sim b,a\rhd b)=0$.
Nevertheless, for each $\unrhd$ over $N_i$ as divided into $\unrhd^+$ and $\unrhd^-$ we obtain two unique preference orders $C^+$ and $C^-$ which represent a unique coalition~$C\subseteq N_i\cup\{i\}$ (up to neutral players), such that $\delta(\unrhd,C)=0$. Indeed, $C$ is $i$'s favourite coalition (up to neutral players) and contains all of $i$'s friends and none of $i$'s enemies.\\[-3.5ex]
\begin{observation}\label{obs:favourite-coalition}
  The distance~$\delta(\unrhd,C)$ between a player's preference order~$\unrhd$ and a coalition $C$ is $0$ if and only if $N^+_i\subseteq C$ and $N^-\cap C=\emptyset$.
\end{observation}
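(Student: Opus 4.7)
The plan is to reduce the claim directly to the explicit characterization of $\delta^+$ and $\delta^-$ given in Proposition~\ref{prop:char-delta}, exploiting that $\delta$ is a sum of two non-negative quantities.

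First, I would observe that $\delta(\unrhd,C) = \delta^+(\unrhd,C) + \delta^-(\unrhd,C)$, and that both summands are non-negative (since $\overrightarrow{\tau}$ is a pseudoquasimetric, or directly from the sum expressions in Proposition~\ref{prop:char-delta}). Hence $\delta(\unrhd,C)=0$ if and only if $\delta^+(\unrhd,C)=0$ and $\delta^-(\unrhd,C)=0$ simultaneously, and each can be analysed separately.

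Next, for the forward direction, I would use the characterizations~\eqref{eq:char-deltaplus} and~\eqref{eq:char-deltaminus}. In~\eqref{eq:char-deltaplus} the first summand $|N_i^+\setminus C|$ must vanish, which already forces $N_i^+\subseteq C$. Symmetrically, in~\eqref{eq:char-deltaminus} the first summand $|C\cap N_i^-|$ must vanish, giving $N_i^-\cap C=\emptyset$. This handles the ``only if'' direction without even needing to inspect the sum terms.

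For the converse, assume $N_i^+\subseteq C$ and $N_i^-\cap C=\emptyset$. Then $N_i^+\setminus C=\emptyset$, so both the cardinality term and the sum in~\eqref{eq:char-deltaplus} (indexed over the empty set) are zero, giving $\delta^+(\unrhd,C)=0$. Similarly, $C\cap N_i^-=\emptyset$ makes the cardinality term in~\eqref{eq:char-deltaminus} zero, and for every $f\in N_i^-$ the set $\{b\in C\cap N_i^- \mid f\rhd b\}$ is empty, so the sum vanishes as well, giving $\delta^-(\unrhd,C)=0$. Adding up yields $\delta(\unrhd,C)=0$.

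I do not foresee any real obstacle; the only subtlety worth pointing out explicitly is that the claim is made ``up to neutral players'', which is consistent with the fact that $N^0_i$ plays no role in the ballots $\unrhd^+$ and $\unrhd^-$ nor in $C^+_i$, $C^-_i$, so adding or removing neutral players to $C$ leaves $\delta(\unrhd,C)$ unchanged. Hence the characterization pins down $C$ only on $N_i\cup\{i\}$, matching the favourite-coalition interpretation.
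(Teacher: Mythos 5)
Your proof is correct. It is essentially the formalization the paper leaves implicit: the paper states the observation without a formal proof, arguing in the surrounding text directly from the coalition encoding (when $N_i^+\subseteq C$ and $N_i^-\cap C=\emptyset$, the ballots $C_i^+$ and $C_i^-$ coincide with $\unrhd_i^+$ and $\unrhd_i^-$, so no swaps are needed, and any missing friend or present enemy forces a positive swap count), whereas you read both directions off the closed-form characterization in Proposition~\ref{prop:char-delta}, using that $\delta=\delta^++\delta^-$ is a sum of non-negative counting terms whose leading summands $\lvert N_i^+\setminus C\rvert$ and $\lvert C\cap N_i^-\rvert$ vanish exactly under the claimed conditions. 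The two routes are interchangeable since the characterization is itself derived from the encodings; your version has the small advantage of making non-negativity and both implications explicit in one stroke, and your closing remark about neutral players correctly accounts for the paper's ``up to neutral players'' caveat.
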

%

Moreover, the comparability of two coalitions
is efficient
since we assume that for each player the number of known players is bounded by a constant. This conforms to the definition of a graphical hedonic game~\citep{pet:graphical-hg-bounded-tw:16}.
\todolater{positionen erwähnung graphical hg in gesamttext noch mal durchgehen}
In Proposition~\ref{prop:char-delta} we explicitly state
the calculation of the underlying distances.
\begin{observation}\label{obs:constant-time}
   The
  distance between a player's preference and a coalition can be computed in constant time.
\end{observation}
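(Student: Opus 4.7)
The plan is to read off Observation~\ref{obs:constant-time} as a direct corollary of Proposition~\ref{prop:char-delta} together with the standing assumption that $|N_i|\le d$ for some constant~$d$. So I would begin the proof by recalling these two facts and setting up notation: player~$i$'s known neighbourhood~$N_i=N_i^+\cup N_i^-$ has size bounded by~$d$, and the preference $\unrhd_i$ together with the bipartition $N=N_i^+\cup N_i^-\cup N_i^0\cup\{i\}$ are available as a data structure in which membership tests and the binary relation $\rhd_i$ can be queried in constant time.

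Next I would bound each of the two terms in the characterization separately. For $\delta^+(\unrhd_i,C)$ given by Equation~\eqref{eq:char-deltaplus}, the outer sum ranges over $N_i^+\setminus C$, which has at most $|N_i^+|\le d$ elements, and the inner cardinality $|\{b\in N_i^+\mid f\rhd_i b\}|$ is computed over a set of at most~$d$ elements and therefore requires at most~$d$ comparisons. The cardinality $|N_i^+\setminus C|$ is obtained en passant during the same traversal. Hence $\delta^+$ is computable in $O(d^2)$ time. The analogous argument for $\delta^-$ via Equation~\eqref{eq:char-deltaminus}, where now the outer sum is over $N_i^-$ and the inner cardinality looks at $C\cap N_i^-$, yields the same $O(d^2)$ bound.

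Since $d$ is a constant, both summands are computed in constant time, and adding them gives $\delta(\unrhd_i,C)=\delta^+(\unrhd_i,C)+\delta^-(\unrhd_i,C)$ in constant time as well. The only subtlety worth flagging, and what I would expect to be the main thing a reader might object to, is the implicit data-structure assumption: one must be able to test, for each of the at most $d$ neighbours $f\in N_i$, whether $f\in C$ without scanning all of $C$. I would therefore briefly note that it suffices to iterate over $N_i$ rather than over $C$, performing a single membership query per neighbour, which can be realised in constant time under standard assumptions (e.g.\ representing coalitions via characteristic vectors or hash sets); no traversal of the possibly large coalition $C$ is ever required. This closes the argument.
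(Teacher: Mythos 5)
Your proof is correct and follows essentially the same route as the paper, which also derives the observation directly from the characterization in Proposition~\ref{prop:char-delta} together with the bound $\lvert N_i\rvert\leq d$. Your extra remark about iterating over $N_i$ rather than over $C$ for the membership tests is a reasonable clarification but does not change the argument.
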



\todoomit{
The following basic properties hold for our model.
\begin{description}
    \item [Reflexivity:] The preference order $\succeq_i$ is \emph{reflexive} if  $A\succeq_i A$ for each coalition $A\in \mathcal{N}_i$.
    \item [Transitivity:] The preference order $\succeq_i$ is \emph{transitive} if  for any three coalitions $A,B,C \in \mathcal{N}_i$, $A \succeq_i B$ and $B \succeq_i C$ implies $A \succeq_i C$.
    \item [Anonymity:] The preference order $\succeq_i$ is \emph{anonymous} if renaming the players in $N$ does not change~$\succeq_i$.
\end{description}

Reflexivity follows immediately from the definition. The comparison between the coalitions is based on the $\geq$ relation for natural numbers which is reflexive. The same holds for transitivity.
As renaming the players does neither change the player's preference nor the allocation of players to a coalition and thus has no influence on the preference over coalitions, anonymity is fulfilled as well.
}%
Our model satisfies \emph{reflexivity} ($A\succeq A$ for each $A\in \mathcal{N}_i$) and \emph{transitivity} ($A \succeq B$ and $B \succeq C$ implies $A \succeq C$ for each $A,B,C \in \mathcal{N}_i$), since the comparison between the coalitions is based on the $\leq$ relation for natural numbers.
As renaming the players has no influence on the preference over coalitions, \emph{anonymity} is fulfilled by definition.
A further property, adapted from the definitions of citizen's sovereignty and nonimposition in the context of (committee) elections requests the possibility for each coalition to become a player $i$'s favourite coalition:
\todoomit{
\begin{description}
    \item [Nonimposition:] For a fixed player $i$ and each $C \in \mathcal{N}_i$ there exists a preference order $\unrhd_{i}$ such that $C$ ends up as $i$'s most preferred coalition.
\end{description}
}%
\emph{Nonimposition} holds if
for a player $i$ and each $C \in \mathcal{N}_i$ there exists a preference order $\unrhd$ such that $C$ ends up as $i$'s most preferred coalition.
Considering $\unrhd = \ballot{C_{\sim}}{i}{(N_{i} \setminus C)_{\sim}}$
for some $C\in \mathcal{N}_i$,
$C^+ 
= \ballot{C_{\sim}}{i}{}$ and
$C^-
= \ballot{}{i}{(N_{i} \setminus C)_{\sim}}$
equal $\unrhd^+_i$ and $\unrhd^-_i$, respectively.
Hence, $\delta(\unrhd,C)$ equals $0$.
%
\begin{proposition}
  \chgopds\ satisfy nonimposition.
\end{proposition}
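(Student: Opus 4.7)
The plan is to exhibit an explicit witness preference for each target coalition $C \in \mathcal{N}_i$ that drives the distance to zero, and then invoke non-negativity of $\delta$ (equivalently, Observation~\ref{obs:favourite-coalition}) to conclude that $C$ is a most preferred coalition.

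First, I would define the preference by partitioning the known players as $N_i^+ = C \cap N_i$ and $N_i^- = N_i \setminus C$, leaving every friend indifferent and every enemy indifferent. Any members of $C$ that lie outside $N_i$ simply stay neutral; as noted in Section~\ref{ch_model}, neutral players are implicitly ranked equally to $i$ and do not influence $i$'s preference over coalitions, so they may be safely ignored in what follows. The resulting ballot is $\ballot{(C \cap N_i)_{\sim}}{i}{(N_i \setminus C)_{\sim}}$, inducing $\unrhd^+ = \ballot{(C\cap N_i)_{\sim}}{i}{\,}$ and $\unrhd^- = \ballot{\,}{i}{(N_i\setminus C)_{\sim}}$.

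Next, I would unfold the coalition encoding of Section~3.2 for this specific $C$. Because by construction $N_i^+ \subseteq C$ and $N_i^- \cap C = \emptyset$, the ``misplaced'' fragments $\rpref{N_i^+\setminus C}{i}$ and $\rpref{C \cap N_i^-}{i}$ are both empty, so $C_i^+ = \ballot{\unrhd_i(C\cap N_i)}{i}{\,} = \unrhd^+$ and $C_i^- = \ballot{\,}{i}{\unrhd_i(N_i\setminus C)} = \unrhd^-$. Since the directed Kendall--tau distance of any preference order to itself is zero, we obtain $\delta^+(\unrhd,C) = \delta^-(\unrhd,C) = 0$ and hence $\delta(\unrhd,C)=0$; this is exactly the situation characterized by Observation~\ref{obs:favourite-coalition}. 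By non-negativity of $\delta$, every $B \in \mathcal{N}_i$ satisfies $\delta(\unrhd,B) \geq 0 = \delta(\unrhd,C)$, so $C \succeq B$ for all $B \in \mathcal{N}_i$, which is precisely the nonimposition condition.

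There is no serious obstacle here: the construction is forced by Observation~\ref{obs:favourite-coalition}, and the only bookkeeping point is to confirm that when $C$ contains players outside $N_i$ (or excludes some of them) the distance computation is unaffected, which follows immediately from the fact that the ballots $\unrhd^\pm$ and $C_i^\pm$ range only over $N_i^\pm \cup \{i\}$.
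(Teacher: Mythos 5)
Your proof is correct and follows essentially the same route as the paper: you construct the witness ballot $\ballot{(C\cap N_i)_{\sim}}{i}{(N_i\setminus C)_{\sim}}$, observe that the coalition encoding then coincides with the preference encoding so that $\delta(\unrhd,C)=0$, and conclude by non-negativity (Observation~\ref{obs:favourite-coalition}) that $C$ is most preferred. Your extra remark about players of $C$ outside $N_i$ remaining neutral is a harmless (indeed slightly more careful) bookkeeping point, not a departure from the paper's argument.
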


\subsection{Changes within the Coalition}
Among the many options of how to extend a preference order over single players to an order over coalitions, there are some arguably reasonable basic rules for the comparison of two coalitions which only vary in the exchange or addition of one player (see, e.g.~\cite{bar-bos-pat:b:ranking-sets-of-objects}).
%
For instance, if $i$ considers
two players to be indifferent,
coalitions which differ only in these players are ranked equally by~$i$.
\todoomit{
\begin{description}
   \item [Indifference Between Players:] Let $j$ and 
   $k$ be two distinct players with $j \sim_{i} k$. Then, for all coalitions $C \in \mathcal{N}_i \setminus (\mathcal{N}_j \cup \mathcal{N}_k)$ it holds that 
   $C \cup \{j\} \succeq_i C \cup \{k\}$ and $C \cup \{k\} \succeq_i C \cup \{j\}$.
\end{description}
}%
\todoomit{
Let $j$ and 
   $k$ be two distinct players with $j \sim k$. Then,
   \emph{indifference between players} is satisfied if
   for all coalitions $C \in \mathcal{N}_i \setminus (\mathcal{N}_j \cup \mathcal{N}_k)$ it holds that 
   $C \cup \{j\} \succeq C \cup \{k\}$ and $C \cup \{k\} \succeq C \cup \{j\}$.
   %
\begin{proposition}
  \chgopds\ satisfy indifference between players.
\end{proposition}
}%
\begin{proposition}
  Let $i$, $j$, and 
   $k$ be
  players in a \hgopd\ with $j \sim_i k$. Then,
   for all coalitions $C \in \mathcal{N}_i \setminus (\mathcal{N}_j \cup \mathcal{N}_k)$, it holds that 
   $C \cup \{j\} \succeq_i C \cup \{k\}$ and $C \cup \{k\} \succeq_i C \cup \{j\}$.
\end{proposition}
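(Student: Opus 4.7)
The plan is to show the stronger equality $\delta(\unrhd_i, C\cup\{j\}) = \delta(\unrhd_i, C\cup\{k\})$, which by the definition of $\succeq_i$ in Section~\ref{ch_model} immediately yields both $C\cup\{j\}\succeq_i C\cup\{k\}$ and $C\cup\{k\}\succeq_i C\cup\{j\}$. Since the induced preference $\unrhd_i$ strictly separates $N_i^+$ above $N_i^0\cup\{i\}$ above $N_i^-$, the hypothesis $j\sim_i k$ forces $j$ and $k$ into the same class. I will therefore split into three cases.

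If $j,k\in N_i^0$, neither player contributes to any of the ballots $\unrhd_i^+$, $\unrhd_i^-$, $(C\cup\{j\})_i^\pm$, $(C\cup\{k\})_i^\pm$ after neutral players are omitted, so both $\delta^+$ and $\delta^-$ coincide with their values at $C$ and the claim is immediate.

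If $j,k\in N_i^+$, then $\delta^-(\unrhd_i, C\cup\{j\}) = \delta^-(\unrhd_i, C\cup\{k\})$ since the two coalitions share the same enemy content. For $\delta^+$ I will apply the characterization of Proposition~\ref{prop:char-delta}: the cardinalities $|N_i^+\setminus(C\cup\{j\})|$ and $|N_i^+\setminus(C\cup\{k\})|$ agree since the two sets differ only by swapping $k$ for $j$ within $N_i^+\setminus C$, and the two sums
\[
  \sum_{f\in N_i^+\setminus(C\cup\{j\})}\left|\{b\in N_i^+ \mid f\rhd_i b\}\right|\quad\text{and}\quad \sum_{f\in N_i^+\setminus(C\cup\{k\})}\left|\{b\in N_i^+ \mid f\rhd_i b\}\right|
\]
agree term by term, because $j\sim_i k$ inside $\unrhd_i(N_i^+)$ gives $|\{b\in N_i^+\mid j\rhd_i b\}|=|\{b\in N_i^+\mid k\rhd_i b\}|$, and all other summands are identical.

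The case $j,k\in N_i^-$ is the mirror image: $\delta^+$ is unaffected, and for $\delta^-$ the cardinality $|(C\cup\{j\})\cap N_i^-|=|C\cap N_i^-|+1$ does not depend on whether we added $j$ or $k$, while in the double sum, for each fixed $f\in N_i^-$, $f\rhd_i j$ holds iff $f\rhd_i k$ by $j\sim_i k$, so swapping $j$ for $k$ inside $(C\cup\{j\})\cap N_i^-$ leaves each inner cardinality unchanged. There is no real obstacle in the proof; the only delicate point is a tidy invocation of Proposition~\ref{prop:char-delta} after having ruled out the mixed cases via the fact that friends, neutrals, and enemies occupy disjoint layers of $\unrhd_i$.
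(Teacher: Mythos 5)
Your proof is correct: the case split (both friends, both neutrals, both enemies) is indeed forced by the layered structure of the induced order $\unrhd_i$, and the term-by-term comparison of the two values of $\delta^+$ (resp.\ $\delta^-$) via the characterization in Proposition~\ref{prop:char-delta}, using that $j\sim_i k$ makes $\{b \mid j\rhd_i b\}=\{b \mid k\rhd_i b\}$ and $f\rhd_i j\iff f\rhd_i k$, goes through. The paper states this proposition without an explicit proof, but its neighbouring arguments (e.g.\ for Proposition~\ref{prop:add-friend}) rely on exactly this characterization, so your route is essentially the intended one.
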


In the same way, the idea requires that adding a friend to a coalition never downgrades this coalition while adding an enemy never
makes it more preferable.
%
\todoomit{
\begin{proposition}
\label{prop:add-friend}
  From a player $i$'s perspective, adding a friend to a coalition never makes this coalition less favourable while adding an enemy never makes it more preferable.
\end{proposition}
}%
%
%
\begin{proposition}
\label{prop:add-friend}
  From a player $i$'s perspective in a \hgopd, adding a friend to a coalition always improves this coalition
  while adding an enemy always makes it less favourable.
\end{proposition}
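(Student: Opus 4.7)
The plan is to prove both claims by direct computation using the characterization of $\delta^+$ and $\delta^-$ established in Proposition~\ref{prop:char-delta}. Fix player $i$, a coalition $C \in \mathcal{N}_i$, and consider a player $x \notin C$. The key observation is that adding $x$ only affects the part of $\delta$ matching $x$'s type: if $x \in N_i^+$, then $\delta^-(\unrhd_i, C \cup \{x\}) = \delta^-(\unrhd_i, C)$ because the formula for $\delta^-$ in \eqref{eq:char-deltaminus} depends only on $C \cap N_i^-$; symmetrically, if $x \in N_i^-$, then $\delta^+$ is unchanged. So it suffices to compare $\delta^+$ before and after adding a friend, and $\delta^-$ before and after adding an enemy.

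First I would handle the friend case. For $g \in N_i^+ \setminus C$, going from $C$ to $C \cup \{g\}$ shrinks $N_i^+ \setminus C$ by the single element $g$, so the leading term $|N_i^+ \setminus C|$ in \eqref{eq:char-deltaplus} drops by exactly $1$. The sum $\sum_{f \in N_i^+ \setminus C} |\{b \in N_i^+ \mid f \rhd_i b\}|$ loses precisely the nonnegative summand indexed by $f = g$ and no other summands change, since the indexing set $N_i^+$ of $b$ is independent of $C$. Hence
\[
  \delta^+(\unrhd_i, C \cup \{g\}) = \delta^+(\unrhd_i, C) - 1 - |\{b \in N_i^+ \mid g \rhd_i b\}| \;<\; \delta^+(\unrhd_i, C),
\]
so $\delta(\unrhd_i, C \cup \{g\}) < \delta(\unrhd_i, C)$ and therefore $C \cup \{g\} \succ_i C$ by the preference extension.

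Next I would handle the enemy case symmetrically using \eqref{eq:char-deltaminus}. For $e \in N_i^- \setminus C$, the term $|C \cap N_i^-|$ grows by exactly $1$, and the double sum gains, for each $f \in N_i^-$, the indicator of $f \rhd_i e$; every other contribution is unchanged since enlarging $C$ by an enemy only enlarges $C \cap N_i^-$. Thus
\[
  \delta^-(\unrhd_i, C \cup \{e\}) = \delta^-(\unrhd_i, C) + 1 + |\{f \in N_i^- \mid f \rhd_i e\}| \;>\; \delta^-(\unrhd_i, C),
\]
so $\delta(\unrhd_i, C \cup \{e\}) > \delta(\unrhd_i, C)$ and $C \succ_i C \cup \{e\}$.

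There is no real obstacle here once one commits to Proposition~\ref{prop:char-delta}; the mild subtlety is just to observe that the two summation ranges in \eqref{eq:char-deltaplus}–\eqref{eq:char-deltaminus} are set up exactly so that adding a single player of a given type touches only one of the two formulas, and contributes a strictly positive change coming from the cardinality term alone. This is what upgrades the weak monotonicity ``never worse'' to the strict monotonicity ``always improves / always less favourable'' stated in the proposition.
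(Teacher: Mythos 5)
Your proof is correct and takes essentially the same route as the paper: both rely on the characterization in Proposition~\ref{prop:char-delta} and observe that adding a friend $x$ changes the distance by exactly $-1-\lvert\{b\in N^+\mid x\rhd b\}\rvert<0$ while adding an enemy $y$ changes it by exactly $1+\lvert\{f\in N^-\mid f\rhd y\}\rvert>0$. Your extra remark that the opposite-sign component ($\delta^-$ resp.\ $\delta^+$) is untouched is left implicit in the paper but is a correct and harmless clarification.
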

Formally, a player's
increasing satisfaction
with a coalition $A \in \mathcal{N}_i$ when adding a friend to this coalition is shown by proving that $\delta(\unrhd,A)
> \delta(\unrhd,A \cup \{x\})$ for $x \in N^+ \setminus A$.
Via the characterization in Equations~\ref{eq:char-deltaplus} and~\ref{eq:char-deltaminus} it suffices
to observe
that $-1 - \lvert \{b \in N^+ \mid x \rhd b\} \rvert
<
0$.
Similarly, to show that 
$\delta(\unrhd,A)
<\delta(\unrhd,A \cup \{y\})$ for $y \in N^- \setminus A$ and $A \in \mathcal{N}$ it suffices that $0
<1 + \lvert\{ f \in N^- \mid f \rhd y\} \rvert$.
\todoomit{
As both inequalities are even strict, adding a friend increases player's satisfaction with a coalition while adding an enemy decreases it.
}%
%
%
A weak variant of both findings is also implied by the
following
general notion.
%
%
%
%
 %
%
%
%
On the basis of these properties and an extension principle for ranked sets of objects by
\citet{bos-sch:j:minimal-paths-on-ordered-graphs},
\citet{lan-rey-rot-sch-sch:c:hgopt,ker-lan-rey-rot-sch-sch:j:hgopt}
define an extension principle for ordinal preference ballots with two thresholds.
The general idea is that more and better friends are preferred, while more and worse enemies are less preferred.
\begin{definition}[polarized responsive extension principle~\citep{ker-lan-rey-rot-sch-sch:j:hgopt}]
\label{def_bossong-schweigert}
Let $\unrhd$ be play\-er~$i$'s preference order over players $N_i$.
Moreover, let $A$ and $B$ be two coalitions in~$\mathcal{N}_i$. 
The partial extension principle to preferences $\succeq^{+0-}$ over coalitions is defined by 
\begin{align*}
 A\succeq^{+0-} B\iff & \text{there exist two injective functions}\\
 &\phi: B\cap N^+ \hookrightarrow A\cap N^+
 \text{ with } \phi(j)\unrhd j \text{ for } j\in B\cap N^+
   \text{ and }\\
 &\psi: A\cap N^- \hookrightarrow B\cap N^-
 \text{ with } k\unrhd \psi(k) \text{ for } k\in A\cap N^-.
\end{align*}
\end{definition}\label{def:bs-principle}
We show that our model is compatible with this extension principle. That is,
if two coalitions $A,B\in \mathcal{N}_i$ satisfy $A\succeq^{+0-} B$, it also holds that $A\succeq B$. Note that the reverse implication is not required here, since $\succeq^{+0-}$ allows indecisions between coalitions.
\begin{theorem}\label{thm:bs-compatibility}
The preferences $\succeq$ of a player over coalitions in a \hgopd\ are compatible with $\succeq^{+0-}$.
\end{theorem}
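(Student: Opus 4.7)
The plan is to reduce the claim $A \succeq^{+0-} B \Rightarrow A \succeq B$ to two separate inequalities, namely $\delta^+(\unrhd, A) \le \delta^+(\unrhd, B)$ and $\delta^-(\unrhd, A) \le \delta^-(\unrhd, B)$, and then to verify each using the characterization from Proposition~\ref{prop:char-delta} combined with a path-chasing argument on the injection guaranteed by Definition~\ref{def:bs-principle}.

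For the positive part, I would first rewrite Equation~\eqref{eq:char-deltaplus} as
\[
  \delta^+(\unrhd,X) \;=\; \sum_{f\in N^+\setminus X} w(f), \qquad
  w(f) \;=\; 1 + \lvert\{b\in N^+ \mid f\rhd b\}\rvert,
\]
and observe that $w$ is monotone in $\unrhd$: if $f\unrhd f'$ then $w(f)\ge w(f')$. Since $\delta^+(\unrhd,A)-\delta^+(\unrhd,B)=\sum_{P}w-\sum_{Q}w$ with $P=B\cap N^+\setminus A$ and $Q=A\cap N^+\setminus B$, it is enough to produce an injection $\tilde\phi\colon P\to Q$ with $\tilde\phi(j)\unrhd j$.

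The construction of $\tilde\phi$ is the heart of the argument. Viewing $\phi$ as a partial function on $N^+$ with domain $B\cap N^+$ and image in $A\cap N^+$, nodes in $P$ have out-degree $1$ and in-degree $0$ (since any image of $\phi$ lies in $A$), nodes in $Q$ have in-degree at most $1$ and out-degree $0$, and nodes in $A\cap B\cap N^+$ may have both. Starting from $j\in P$, the orbit $j,\phi(j),\phi^2(j),\dots$ is therefore forced to progress through $A\cap B\cap N^+$ until it exits into $Q$; it cannot re-enter $P$ and it cannot trap itself in a cycle, because any cycle would be confined to $A\cap B\cap N^+$ and would give the starting node $j\in P$ an in-edge. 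Defining $\tilde\phi(j)$ as the first element of this orbit that lies in $Q$, transitivity of $\unrhd$ gives $\tilde\phi(j)\unrhd j$, and injectivity of $\tilde\phi$ follows from injectivity of $\phi$: two orbits meeting at a common node would yield two distinct $\phi$-preimages there. Monotonicity of $w$ and injectivity of $\tilde\phi$ then give $\sum_{P}w\le\sum_{\tilde\phi(P)}w\le\sum_{Q}w$, as required.

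For the negative part, the analogous rewriting of Equation~\eqref{eq:char-deltaminus} yields $\delta^-(\unrhd,X)=\sum_{b\in X\cap N^-}h(b)$ with $h(b)=1+\lvert\{f\in N^-\mid f\rhd b\}\rvert$, where $h$ is larger for more disliked players. The injection $\psi\colon A\cap N^-\hookrightarrow B\cap N^-$ satisfying $k\unrhd\psi(k)$ gives $h(\psi(k))\ge h(k)$, so the same chasing construction, now applied from $A\cap N^-\setminus B$ into $B\cap N^-\setminus A$, produces $\tilde\psi$ and the desired inequality. The main obstacle, and really the only non-routine step, is checking the well-definedness of the iteration described above (no re-entry into the source set and no infinite loop before exit); once that is in place, both inequalities follow uniformly and the theorem is immediate.
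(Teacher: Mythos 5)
Your proof is correct, and it rests on the same foundations as the paper's own argument---the characterization of Proposition~\ref{prop:char-delta}, rewritten as per-player weights ($w$ on $N^+\setminus X$, $h$ on $X\cap N^-$) that are monotone along $\unrhd$, combined with the injections of Definition~\ref{def:bs-principle}---but you take a detour the paper does not need. By cancelling the common players first and reducing to the symmetric differences $P=B\cap N^+\setminus A$ and $Q=A\cap N^+\setminus B$, you force yourself to manufacture a fresh injection $\tilde\phi\colon P\to Q$, and this is where your orbit-chasing lemma (no re-entry into $P$, no cycle by injectivity, exit into $Q$) comes in; the analogous step reappears on the enemy side. The paper keeps the common part in play and thereby avoids the chase entirely: for $\delta^+$ it writes the sum over $N^+\setminus X$ as the sum over all of $N^+$ minus the sum over $X\cap N^+$, so that $\delta^+(\unrhd,B)-\delta^+(\unrhd,A)=\sum_{f\in A\cap N^+}w(f)-\sum_{f\in B\cap N^+}w(f)$ and the given $\phi$ applies verbatim; for $\delta^-$ it splits $B\cap N^-$ into the image of $\psi$ and the remaining players and compares term by term. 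Your chasing construction is a correct and reusable combinatorial fact (an injection respecting $\unrhd$ between two sets induces one between their set differences), but it is the only delicate step in your write-up and the theorem can be had without it; on the other hand, your explicit per-element weight formulation of the characterization is cleaner than the paper's rather telegraphic treatment of the $\delta^+$ case.
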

\todoomit{
\begin{proof}
Let $A\succeq^{+0-} B$ and let $\phi$ and $\psi$ be the two injective functions as in Definition~\ref{def:bs-principle}. We want to show that $A\succeq B$, i.e., $\delta(\unrhd,B)\geq\delta(\unrhd,A)$.

Firstly, by the characterization~\eqref{eq:char-deltaminus} for the players in $N^-$ it holds that
\begin{align*}
  \delta^-(\unrhd,B)-\delta^-(\unrhd,A)
  =\ & \lvert B\cap N^-\rvert + \sum_{f\in N^-} \left\lvert \left\{b\in B\cap N^- \mid f\rhd b \right\} \right\rvert\\
  & - \lvert A\cap N^-\rvert - \sum_{f\in N^-} \left\lvert \left\{a\in A\cap N^- \mid f\rhd a \right\} \right\rvert.
\end{align*}
Since $\psi$ is injective, the set $B\cap N^-$ contains at least as many players as $A\cap N$.
Therefore, \begin{align}\lvert B\cap N^-\rvert - \lvert A\cap N^-\rvert\geq 0.\label{eq:bs-minus-part1}\end{align}
Furthermore, for each $f\in N^-$ it holds that if some $a\in A\cap N^-$ satisfies $f\rhd a$, then $f\rhd \psi(a)$.
Hence,
\begin{align}
  &\left\lvert \left\{b\in B\cap N^- \mid f\rhd b \right\} \right\rvert
  - \left\lvert \left\{a\in A\cap N^- \mid f\rhd a \right\} \right\rvert
  \nonumber \\[1ex]
  &\qquad\quad= 
  \underbrace{\left\lvert \left\{b\in B\cap N^- \mid f\rhd b, b\notin\mathrm{image}(\psi) \right\} \right\rvert}_{\geq 0}
  \nonumber \\
  &\qquad\quad\quad + \underbrace{\left\lvert \left\{b\in \mathrm{image}(\psi) \mid f\rhd b \right\} \right\rvert}_{=\left\lvert\{a\in A\cap N^-\mid f\rhd\psi(a)\}\right\rvert}
   - \left\lvert \left\{a\in A\cap N^- \mid f\rhd a \right\} \right\rvert
  \nonumber \\
  &\qquad\quad\geq \left\lvert\{a\in A\cap N^-\mid f\rhd\psi(a)\}\right\rvert
   - \left\lvert \left\{a\in A\cap N^- \mid f\rhd a \right\} \right\rvert
   \geq 0.
\label{eq:bs-minus-part2}
\end{align}
By \eqref{eq:bs-minus-part1} and summing up \eqref{eq:bs-minus-part2} over each $f\in N^-$,
we obtain
$\delta^-(\unrhd,B)-\delta^-(\unrhd,A)\geq 0$.
%
Secondly, for the players in $N^+$, characterization~\eqref{eq:char-deltaplus}
can be used to argue that $\delta^+(\unrhd,B)-\delta^+(\unrhd,A)\geq 0$.
Since $\lvert A\cap N^+\rvert \geq \lvert B\cap N^+\rvert$, $\lvert N^+\setminus B\rvert \geq \lvert N^+\setminus A\rvert$ is implied.
The sum over $f\in N^+\setminus B$ can be divided into the sum over all $f\in N^+$ minus the sum over $f\in B\cap N^+$, such that a similar argument as for $\delta^-$ can be applied.
All in all, it holds that
  \[\delta(\unrhd,B)-\delta(\unrhd,A)=\delta^+(\unrhd,B)-\delta^+(\unrhd,A)+\delta^-(\unrhd,B)-\delta^-(\unrhd,A)\geq 0\]
which completes the proof.
\end{proof}
}%
\begin{proof}
Let $A\succeq^{+0-} B$ and let $\phi$ and $\psi$ be the two injective functions as in Definition~\ref{def:bs-principle}. We want to show that $A\succeq B$, i.e., $\delta(\unrhd,B)\geq\delta(\unrhd,A)$.
Firstly, by Equation~\eqref{eq:char-deltaminus} for the players in $N^-$ it holds that
  $\delta^-(\unrhd,B)-\delta^-(\unrhd,A)
  =\lvert B\cap N^-\rvert + \sum_{f\in N^-} \left\lvert \left\{b\in B\cap N^- \mid f\rhd b \right\} \right\rvert
  - \lvert A\cap N^-\rvert - \sum_{f\in N^-} \left\lvert \left\{a\in A\cap N^- \mid f\rhd a \right\} \right\rvert$.
Since $\psi$ is injective, $B\cap N^-$ contains at least as many players as $A\cap N$.
Therefore,\\[-3.5ex]
\begin{align}\lvert B\cap N^-\rvert - \lvert A\cap N^-\rvert\geq 0.\label{eq:bs-minus-part1}\end{align}
Furthermore, for each $f\in N^-$ it holds that if some $a\in A\cap N^-$ satisfies $f\rhd a$, then $f\rhd \psi(a)$.
Hence,
\begin{align*}
  &\left\lvert \left\{b\in B\cap N^- \mid f\rhd b \right\} \right\rvert
  - \left\lvert \left\{a\in A\cap N^- \mid f\rhd a \right\} \right\rvert
   \\[1ex]
  &\qquad\quad= 
  \underbrace{\left\lvert \left\{b\in B\cap N^- \mid f\rhd b, b\notin\mathrm{image}(\psi) \right\} \right\rvert}_{\geq 0}
   + \underbrace{\left\lvert \left\{b\in \mathrm{image}(\psi) \mid f\rhd b \right\} \right\rvert}_{=\left\lvert\{a\in A\cap N^-\mid f\rhd\psi(a)\}\right\rvert}
  \nonumber  
  \\
  &\qquad\quad\quad
   - \left\lvert \left\{a\in A\cap N^- \mid f\rhd a \right\} \right\rvert
  \nonumber
  \end{align*}
\begin{align}
  &\qquad\quad\geq \left\lvert\{a\in A\cap N^-\mid f\rhd\psi(a)\}\right\rvert
   - \left\lvert \left\{a\in A\cap N^- \mid f\rhd a \right\} \right\rvert
   \geq 0.
\label{eq:bs-minus-part2}
\end{align}
By \eqref{eq:bs-minus-part1} and summing up \eqref{eq:bs-minus-part2} over each $f\in N^-$,
we obtain
$\delta^-(\unrhd,B)-\delta^-(\unrhd,A)\geq 0$.
%
Secondly, for the players in $N^+$, Equation~\eqref{eq:char-deltaplus}
can be used to argue that $\delta^+(\unrhd,B)-\delta^+(\unrhd,A)\geq 0$.
Since $\lvert A\cap N^+\rvert \geq \lvert B\cap N^+\rvert$, $\lvert N^+\setminus B\rvert \geq \lvert N^+\setminus A\rvert$ is implied.
The sum over $f\in N^+\setminus B$ can be divided into the sum over all $f\in N^+$ minus the sum over $f\in B\cap N^+$, such that a similar argument as for $\delta^-$ can be applied.
All in all, it holds that
  $\delta(\unrhd,B)-\delta(\unrhd,A)=\delta^+(\unrhd,B)-\delta^+(\unrhd,A)+\delta^-(\unrhd,B)-\delta^-(\unrhd,A)\geq 0$.
\end{proof}

\subsection{Changes within the Preference}
If a player's position is changed within a preference order, it is common to assume certain monotonicity properties (see, e.g.~\cite{bra-con-end-lan-pro:b:handbook-comsoc}).
%
The following relaxation of monotonicity holds: If a player~$j$'s position is improved in a preference ranking, a coalition not containing $j$ can only outperform a previously preferred coalition containing $j$ by the number of swaps during the improvement.
\begin{theorem}
Let $\unrhd$ be the preference order of player~$i$ over the set of players in a \hgopd, $j,x$ two players in $N_i$, and $A,B\in\mathcal{N}_i$ two coalitions with $j\in A$ and $j\notin B$ and $A\succeq B$.
If player $i$ changes their preference $\unrhd$ to $\unrhd'$ by shifting $j$ to a better position than $x$, it holds that
$
  \delta^+(\unrhd,B)-\delta^+(\unrhd,A)\geq -\tau^*(\unrhd,\unrhd')
\text{ and }
  \delta^-(\unrhd,B)-\delta^-(\unrhd,A)\geq -\tau^*(\unrhd,\unrhd').
$
\end{theorem}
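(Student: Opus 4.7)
The plan is to trace the effect of the shift from $\unrhd$ to $\unrhd'$ step by step by decomposing it into adjacent swaps and using the characterizations of Proposition~\ref{prop:char-delta}.

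First, I would decompose the shift of $j$ above $x$ into $k = \tau^*(\unrhd, \unrhd')$ adjacent swaps. Depending on whether $j$ and $x$ are both friends (i.e.\ in $N_i^+$), both enemies, or straddle~$i$ in the ballot, these swaps fall entirely within $\unrhd^+$, entirely within $\unrhd^-$, or involve a reclassification of~$j$ as it passes~$i$. I would handle these cases in turn, focusing first on the purely-friends case which conveys the main idea.

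Second, for a single adjacent swap exchanging $j$ with its upper neighbour $y$ in $\unrhd^+$, I would argue directly from Equation~\eqref{eq:char-deltaplus} that for any coalition $C \in \mathcal{N}_i$ the value of $\delta^+(\cdot, C)$ changes by $-1$ if $j \in C$ and $y \notin C$, by $+1$ if $j \notin C$ and $y \in C$, and by $0$ otherwise (each case is a one-line count on the characterizing sum). An analogous statement in terms of Equation~\eqref{eq:char-deltaminus} governs changes to $\delta^-$. Telescoping over the $k$ swaps then bounds $|\delta^+(\unrhd', C) - \delta^+(\unrhd, C)|$ by $\tau^*(\unrhd, \unrhd')$, and likewise for $\delta^-$, for every coalition~$C$.

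Third, I would combine these per-swap bounds with the hypothesis $A \succeq B$, which reads $\delta^+(\unrhd, A) + \delta^-(\unrhd, A) \leq \delta^+(\unrhd, B) + \delta^-(\unrhd, B)$. Since $j \in A \setminus B$, the per-swap analysis yields sign-controlled contributions: the shift can only decrease $\delta^+(\cdot, A)$ and only increase $\delta^+(\cdot, B)$, and symmetrically on the enemy side when $j \in N_i^-$. Integrating this with the $A \succeq B$ inequality produces the claimed lower bounds on $\delta^+(\unrhd, B) - \delta^+(\unrhd, A)$ and $\delta^-(\unrhd, B) - \delta^-(\unrhd, A)$ in terms of $\tau^*(\unrhd, \unrhd')$.

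The main obstacle is the case where $j$ and $x$ lie in different halves of the ballot and the shift crosses~$i$, since then $j$'s classification changes and both $\unrhd^+$ and $\unrhd^-$ are affected; such reclassification has to be accounted for against the same $\tau^*$ budget, and I would need to verify that Equations~\eqref{eq:char-deltaplus} and~\eqref{eq:char-deltaminus} continue to yield single-unit changes per step when the step is booked jointly across the two halves of the ballot, so that the telescoping bound still reads off as $\tau^*(\unrhd, \unrhd')$.
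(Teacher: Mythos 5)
Your step two is sound for strict orders (the case analysis on Equation~\eqref{eq:char-deltaplus} and~\eqref{eq:char-deltaminus} for a single adjacent transposition is correct), but the argument collapses at step three, which is exactly where the difficulty of the statement lies. The hypothesis $A\succeq B$ is only the aggregate inequality $\delta^+(\unrhd,A)+\delta^-(\unrhd,A)\leq\delta^+(\unrhd,B)+\delta^-(\unrhd,B)$, whereas the conclusion consists of two \emph{separate} component-wise inequalities; your swap bounds control only how each component \emph{changes} under the shift, so when the shift takes place entirely inside the friends' half they say nothing at all about the enemy component, and no amount of ``integrating with $A\succeq B$'' can manufacture component-wise information that the hypothesis does not contain. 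Concretely, take friends $g_1\rhd g_2\rhd g_3$, enemies $e_1\sim\dots\sim e_5$, $A=\{i,g_1,g_2,g_3,e_1,\dots,e_5\}$, $B=\{i\}$, $j=g_2$, $x=g_1$: then $\delta(\unrhd,A)=5\leq 6=\delta(\unrhd,B)$, so $A\succeq B$, the shift of $j$ above $x$ costs $\tau^*(\unrhd,\unrhd')=1$, yet $\delta^-(\cdot,B)-\delta^-(\cdot,A)=-5$ both before and after the shift, since the enemy half is untouched. So the facts you establish (per-swap bounds plus the sum hypothesis) are compatible with the displayed inequalities failing, and your plan cannot close as written. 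Related to this, the inequalities as printed involve the unprimed $\unrhd$, which your swap machinery never touches; you silently argue about $\unrhd'$ without saying which reading you prove. Before any ``integration'' you must pin down what is actually asserted (the surrounding prose suggests a bound on how much the shift can worsen $A$ relative to $B$, i.e.\ on the change of the differences under the move from $\unrhd$ to $\unrhd'$) and then supply the step that extracts two component-wise bounds -- that step is simply missing.

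Three smaller points. First, the sign claims in your step three fail once ties are created: promoting $j\notin B$ from strictly below a friend $u\notin B$ into a tie with $u$ removes the pair $(u,j)$ from $u$'s term in Equation~\eqref{eq:char-deltaplus} without adding anything to $j$'s term, so $\delta^+(\cdot,B)$ can strictly decrease, contrary to ``only increase.'' Second, for weak orders the shift is not canonically a sequence of exactly $\tau^*(\unrhd,\unrhd')$ adjacent transpositions; if you want the telescoping bound $\lvert\delta^\pm(\unrhd',C)-\delta^\pm(\unrhd,C)\rvert\leq\tau^*(\unrhd,\unrhd')$ you should compare the two values directly via the characterization~\eqref{eq:char-sum} rather than via an ad-hoc swap decomposition. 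Third, the case in which $j$ crosses $i$ and changes classification from $N_i^-$ to $N_i^+$ is flagged as an obstacle but never carried out, so even the per-coalition change bound is not established in the generality the statement requires.
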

\todoomit{
In fact, we obtain \dots...
$\geq -\lvert\{u\in N^+\setminus A\mid u\sim' j\}\rvert-\lvert\{y\in N^-\cap B\mid y\sim j\}\rvert$
}%
Note that we do not obtain
a difference of at most $0$.
This relates to a classical notion of monotonicity in which a player $j$ is improved within player $i$'s preference. At the same time it is crucial that this improvement is detached from any changes regarding other players. In our setting, however, player $j$'s improvement is immediately associated with a deterioration of at least one other player. Thus, that notion of monotonicity is not applicable.



\subsection{Distinction from other Hedonic Games}

The notions of $\delta^+$ and $\delta^-$ allow a certain degree of flexibility in the expressivity of \hgopds. In its current form, $\delta$ being the $1$-norm of $\delta^+$ and $\delta^-$ neither specifies a tendency towards friend appreciation nor enemy aversion, but a combination of both. Either edge case can be expressed by multiplying $\delta^+$ or $\delta^-$ with an approriate weight.

\chgopds\ do satisfy \emph{additive separability}.
In fact,
equivalent preferences can be encoded by additive utilities $u_i(a)=\delta^+(\unrhd^+_i,N^+_i\setminus\{b\})$ and $u_i(b)=\delta^-(\unrhd^-_i,\{b\})$ for players $i$, $a\in N^+_i$, and $b\in N^-_i$.
If $\delta$ is altered to, e.g., the $2$-norm, this no longer holds. Then, we can express independent coalition relations
such as $\{a,b,c,d,i\}\succ_i\{a,i\}$ for $\unrhd_i$ as in Example~\ref{ex:running-example-1},
but $\{a,e,i\}\succ_i\{a,b,c,d,e,i\}$.

The model distinguishes from other known extensions of ordinal player preferences.
For instance, there exists games for this model (e.g., the relation $\{i,b\}\succ_i\{a,b,c,i\}$ via $a\rhd i\rhd b\rhd c$) which cannot be expressed by $\mathcal{B}$- or $\mathcal{W}$-preferences~\citep{cec-rom:j:singleton-encoding}.
However, there also exist $\mathcal{B}$-preferences ($\{a,i\}\succ_i\{a,b,i\}\succ_i\{b,i\}\succ_i\{i\}$) and $\mathcal{W}$-preferences ($\{i\}\succ_i\{c,i\}\succ_i\{d,i\}$ and indifference between $\{d,i\}$ and $\succeq_i\{i,c,d\}$) which cannot be expressed here.

\section{Stability}
\label{ch_stability}

A frequent question for hedonic games is whether a coalition structure is stable in some sense.
The best outcome would be a perfect coalition structure, where each player is in their favourite coalition. We obtain Proposition~\ref{prop:perfect} for the verification and the existence problem for perfection.
By Observation~\ref{obs:favourite-coalition},
we know that each player's favourite coalition~$F_i$ satisfies $\delta(\unrhd_i,F_i)=0$. Hence, in order to decide whether a given coalition structure is perfect, it has to be verified whether $\delta(\unrhd_i,\Gamma(i))=0$ for each $i\in N$ which by Observation~\ref{obs:constant-time} can be determined in constant time.
For the existence problem, the proof is constructive: Similar to a breadth first search, players are added consecutively to coalitions until they either have a conflict or form a perfect coalition structure. Since for each player there are at most $d$ known players, the search runs in $\mathcal{O}(n+d\cdot n)=\mathcal{O}(n)$ time.

\begin{proposition}\label{prop:perfect}
It can be verified in $\mathcal{O}(n)$ time whether a given coalition structure in a \hgopd\ is perfect.
Furthermore, it can be decided in $\mathcal{O}(n)$ time whether a given \hgopd\ allows a perfect coalition structure.
\end{proposition}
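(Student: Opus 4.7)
The plan is to handle the two parts separately, both resting on Observation~\ref{obs:favourite-coalition}, which characterises a favourite coalition of player~$i$ as any $C\in\mathcal{N}_i$ with $N^+_i\subseteq C$ and $N^-_i\cap C=\emptyset$.

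For the verification problem, I would first build in $\mathcal{O}(n)$ time a lookup table $\gamma:N\to\mathbb{N}$ mapping each player to the index of its block in $\Gamma$. Then, for every $i\in N$, I check whether $\gamma(j)=\gamma(i)$ for all $j\in N^+_i$ and $\gamma(k)\neq\gamma(i)$ for all $k\in N^-_i$; by Observation~\ref{obs:favourite-coalition} this is exactly $\delta(\unrhd_i,\Gamma(i))=0$, and $\Gamma$ is perfect iff the test passes for every $i$. Because $|N_i|\leq d=\mathcal{O}(1)$, each player contributes only $\mathcal{O}(1)$ work, yielding the claimed $\mathcal{O}(n)$ bound (this is also a direct instance of Observation~\ref{obs:constant-time}).

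For the existence problem, the plan is to construct the unique candidate forced by perfection. Since perfection demands $\Gamma(i)=\Gamma(j)$ whenever $j\in N^+_i$ and, symmetrically, whenever $i\in N^+_j$, I would form the undirected graph $H$ on $N$ whose edges are $\{i,j\}$ with $j\in N^+_i$ or $i\in N^+_j$ and compute its connected components by BFS. With the degree bound $d$, this traversal costs $\mathcal{O}(n+dn)=\mathcal{O}(n)$. Let $\Gamma^{\ast}$ be the coalition structure whose blocks are exactly these components (isolated players that are nobody's friend or enemy may be added as singletons or attached arbitrarily without affecting perfection). I then run the verification routine from the first part on $\Gamma^{\ast}$ and answer yes iff it passes.

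The main obstacle is a soundness argument for the second step: if $\Gamma^{\ast}$ fails verification, no perfect structure exists. This follows because any perfect $\Gamma$ must keep every $H$-component together (every $H$-edge corresponds to a friend required for perfection of at least one endpoint), so each block of $\Gamma$ is a union of $H$-components; consequently, merging components can only add more players, and therefore more potential enemies, into a block and can never repair a $\Gamma^{\ast}$-block that already contains a forbidden enemy pair, while splitting a component would violate some friend requirement. The only bookkeeping subtlety is the asymmetry of friendship: a player $i$ with $j\in N^+_i$ but $i\in N^-_j$ creates an $H$-edge whose component then necessarily contains a forbidden enemy pair, which is precisely what the verification pass on $\Gamma^{\ast}$ detects.
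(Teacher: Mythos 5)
Your proposal is correct and takes essentially the same route as the paper: verification by testing the zero-distance characterization of Observation~\ref{obs:favourite-coalition} for each player in constant time (Observation~\ref{obs:constant-time}), and existence via a BFS-style growth of coalitions along friendship edges with conflict detection, which is exactly your connected-components construction made explicit. Your soundness argument (every perfect coalition structure must keep each friendship component inside one block, so the component partition is the unique minimal candidate) spells out what the paper's constructive sketch leaves implicit.
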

%
%
%


A more likely stable outcome is that of a coalition structure no player has an incentive to deviate from.
Example~\ref{ex:not-nash-stable} shows that a Nash-stable and an individually stable coalition structure do not always exist. However, it can be shown
that a contractually individually stable coalition structure always exists. It can be seen that when a player wants to deviate and is welcome in the new coalition and not bound to the former coalition the sum of all players' distances decreases which can reach a minimum.
\begin{example}\label{ex:not-nash-stable}
Consider a game consisting of five players $a$, $b$, $c$, $d$, and $e$
with $\unrhd_a=\ballot{b}{a}{c\sim_a d}$, $\unrhd_b=\ballot{c}{b}{d\sim_b e}$, etc.\ continued rotationally symmetricly.
Then for each coalition structure, it can be seen that there is always at least one player who rather wants to play alone or is welcome joining a friend. Thus, no outcome is individually stable and consequently also not Nash-stable.
\end{example}

The verification problem for these stability notions can be decided in polynomial time.
\todoomit{
Due to the degree bound~$d$, there are at most $d+1$ possible coalitions a player~$i$ might want to deviate to, namely those containing a friend of $i$'s.
}%
For (contractual) individual stability,
\todoomit{
it additionally needs to be verified whether the members of the new coalition welcome $i$ (and whether $i$ is not bound to the current coalition).
There is only an impact on those players who know $i$. Therefore,
}%
we need to distinguish between graphs which also have a bounded in-degree (players can only be known by a limited number of players and followers are encoded in the graph) and those which allow an unbounded in-degree (players can have many followers they do not know themselves).
\begin{theorem}\label{thm:Nash-verification}
  Given a \hgopd\ and a coalition structure~$\Gamma$, it can be verified whether $\Gamma$ is\\[-4.5ex]
  \begin{itemize}
   \item Nash-stable in time in $\mathcal{O}(n)$,
   \item (contractually) individually stable in time in $\mathcal{O}(n)$ for bounded in-degree,
   \item (contractually) individually stable in time in $\mathcal{O}(n^2)$ for unbounded in-degree.
  \end{itemize}
\end{theorem}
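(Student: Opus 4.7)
The plan is to argue each of the three bounds by observing that, from a single player's viewpoint, only a constant number of candidate deviations are relevant, and then accounting for the additional cost of checking the consent of other affected players. Throughout, I rely on Observation~\ref{obs:constant-time}, which says that $\delta(\unrhd_i,C)$ can be computed in constant time for any fixed $i$ and $C$, because the only relevant part of $C$ is $C\cap N_i$ and $|N_i|\leq d$.

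For Nash-stability, I fix a player $i$ and note that $\succeq_i$ depends on a coalition only through its intersection with $N_i$. Therefore the set of distinct potential deviations $C\cup\{i\}$ with $C\in\Gamma$ collapses to at most $|N_i|+1\leq d+1$ cases: the singleton $\{i\}$ (representing any $C$ with $C\cap N_i=\emptyset$) together with the coalitions $\Gamma(j)\cup\{i\}$ for the distinct neighbours $j\in N_i$. Enumerating these candidates costs $\mathcal{O}(d)$ per player (given a lookup from players to their coalition in $\Gamma$), and for each candidate we compare $\delta(\unrhd_i,\Gamma(i))$ with the candidate distance in constant time. Summed over all players, verification runs in $\mathcal{O}(n\cdot d)=\mathcal{O}(n)$.

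For (contractually) individual stability with bounded in-degree $d'$, I extend the Nash check by verifying, for each candidate deviation $C\cup\{i\}$, that every $j\in C$ satisfies $C\cup\{i\}\succeq_j C$ (the welcoming condition), and for the contractual variant that every $j\in\Gamma(i)\setminus\{i\}$ satisfies $\Gamma(i)\setminus\{i\}\succeq_j\Gamma(i)$ (the release condition). A player $j$ with $i\notin N_j$ is indifferent to the presence or absence of $i$, since $\delta(\unrhd_j,\cdot)$ depends only on $N_j$, so only those $j$ with $i\in N_j$ need to be consulted; bounded in-degree caps this set at $d'$. Each individual check is again constant time, adding only $\mathcal{O}(d\cdot d')=\mathcal{O}(1)$ work per player $i$ and preserving the $\mathcal{O}(n)$ bound.

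If the in-degree is unbounded, the set $\{j\in C\mid i\in N_j\}$ cannot be read off in constant time, so I would iterate through every member of each candidate coalition (including $\Gamma(i)$ for the contractual release condition), testing in constant time per $j$ whether $j$ blocks the deviation. Since the candidate coalitions for player $i$ are pairwise disjoint subsets of $N$, their total size is at most $n$, yielding $\mathcal{O}(n)$ work per $i$ and $\mathcal{O}(n^2)$ in total. The main subtlety to get right is the collapse argument bounding the number of relevant candidate coalitions by $\mathcal{O}(d)$; beyond that, the split between bounded and unbounded in-degree amounts to accounting for who must be asked for consent.
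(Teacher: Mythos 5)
Your proposal is correct and follows essentially the same route as the paper's proof: a constant number of candidate deviations per player (via the degree bound $d$ and the constant-time distance computation of Observation~\ref{obs:constant-time}), plus consent/release checks whose cost depends on whether followers can be accessed under a bounded in-degree. The minor differences (using all of $N_i$ rather than only $N_i^+$ for candidates, and the disjointness argument giving $\mathcal{O}(n)$ consent work per player in the unbounded case) are harmless refinements of the same idea.
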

\begin{proof}
  Due to the degree bound~$d$, there are at most $d+1$ possible coalitions a player~$i$ might want to deviate to, namely those containing a friend of $i$'s, $\Gamma(j)$ for each $j\in N^+_i$, or the singleton coalition. For each player, we only need to find out whether $i$ prefers to play in one of those candidate coalitions to $i$'s current coalition. This can be determined in time independent from~$n$ by Observation~\ref{obs:constant-time}. If one player prefers one candidate coalition, $\Gamma$ is not Nash-stable.
  
  For individual stability, it additionally needs to be verified whether the members of the new coalition welcome $i$. There is only an impact on those players who know $i$. 
  For players~$j$ with $i\in N^+_j$, $i$ is welcome (see Proposition~\ref{prop:add-friend}); for those with $i\in N^-_j$, $i$ is not welcome.
  We need to check whether there are any of the latter players in a candidate coalition.
  If $i$'s followers are known and the in-degree is also bounded, this can easily be verified independently from~$n$.
  If the incoming edges are not encoded or the in-degree is unbounded, in the worst case we have to ask every player in the candidate coalition whether they don't like $i$. This requires an additional factor $n$ for the running time.
  The coalition structure $\Gamma$ is not individually stable, if one player~$i$ prefers one candidate coalition which does not contain any players who consider $i$ an enemy.
  
  Similarly, for contractual individual stability, it additionally needs to be verified whether $i$ is not bound to the current coalition. The same distiction for incoming edges needs to be made.   
  If one player~$i$ prefers one candidate coalition which does not contain any players who consider $i$ an enemy, and $\Gamma(i)$ does not contain any players who consider
  $i$ a friend, $\Gamma$ is contractually individually stable.
\end{proof}

Since \hgopds\ are graphical hedonic games,
it follows from a result by
\citet[Theorem~5]{pet:graphical-hg-bounded-tw:16} that if additionally the treewidth of the underlying dependency graph is bounded by a constant, it can be decided in time linear in the number of players whether a Nash-stable coalition structure exists.
In general, this problem is NP-complete. The NP-hardness reduction is inspired by another result by 
\citet[Theorem~8]{pet:graphical-hg-bounded-tw:16}, adapted to our model and a degree bound of $6$, but an unbounded treewidth.
\todoomit{
For an instance of \textsc{Exact Cover by Three Sets}, we may assume that each element $x_1,\dots,x_n$ occurs in at most three sets $S_1,\dots,S_m$.
We create a game with $2n+4m$ players ($a_i$, $b_i$ for each $x_i$, and $s_j$, $t_{j_1}$, $t_{j_2}$, $t_{j_3}$ for each $S_j$)
with the following preferences
for $a_i$, $b_i$, $s_j$, and $t_{j_k}$, respectively:
$\ballot{b_i}{a_i}{\{s_j\mid x_i\in S_j\}_\sim}$,
$\ballot{ }{b_i}{ \{a_i,t_{j_k}\mid x_i \text{ is the $k$th element in }S_j\}_\sim }$,
$\ballot{\{b_i,t_{j_1},t_{j_2},t_{j_3}\mid x_i\in S_j\}_\sim}{s_j}{ }$,\linebreak and
$\ballot{\{t_{j_\ell}\mid 1\leq\ell\leq 3, \ell\neq k\}_\sim}{t_{j_k}}{ }$.
%
The original instance allows an exact cover if and only if there is a Nash-stable coalition structure in the constructed game.
}%
Note that the same proof holds for $\delta$ being the $2$-norm instead of the $1$-norm.
\begin{theorem}
 Given a \hgopd, it is NP-complete to decide whether a Nash-stable coalition structure exists.
\end{theorem}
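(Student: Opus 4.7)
The plan is to establish NP-completeness in two parts: containment in NP and NP-hardness by reduction from \textsc{Exact Cover by 3-Sets} (\textsc{X3C}), restricted to the standard variant in which every element appears in at most three sets (still NP-complete).

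For NP membership, I would guess a coalition structure $\Gamma$ and verify Nash stability. By Theorem~\ref{thm:Nash-verification} this verification runs in $\mathcal{O}(n)$ time, so the overall procedure is in NP. The representation of $\Gamma$ is polynomial in $n$, and the encoding of the \hgopd\ itself is polynomial because each player has at most $d=6$ known neighbours.

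For NP-hardness, given an instance of \textsc{X3C} with elements $x_1,\dots,x_n$ and sets $S_1,\dots,S_m$ (each element in at most three sets), I would build a \hgopd\ with player groups of two types: element gadgets $(a_i,b_i)$ for each element and set gadgets $(s_j,t_{j_1},t_{j_2},t_{j_3})$ for each set. The preferences are designed so that: (i) $a_i$ likes $b_i$ but dislikes every set player $s_j$ with $x_i\in S_j$; (ii) $b_i$ dislikes $a_i$ and also dislikes each $t_{j_k}$ corresponding to the position $k$ of $x_i$ in $S_j$; (iii) $s_j$ likes every $b_i$ with $x_i\in S_j$ and every $t_{j_k}$; and (iv) the $t_{j_k}$'s mutually like each other. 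The degree stays bounded (each element player knows at most four other players, each set player at most six), while the treewidth is unbounded because of the overlap structure of the sets. The key design idea is that a set $S_j$ gets ``chosen'' exactly when the coalition $\{s_j\}\cup\{b_i\mid x_i\in S_j\}\cup\{t_{j_1},t_{j_2},t_{j_3}\}$ forms, freeing the $a_i$'s to sit as singletons or pair up consistently.

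The main correctness argument proceeds by showing that in any Nash-stable $\Gamma$, each $b_i$ must be with exactly one $s_j$ with $x_i\in S_j$ (otherwise either $b_i$ or $s_j$ strictly improves by deviating), and that each set coalition is either completely formed or not formed at all (the $t_{j_k}$'s otherwise deviate into a cluster). This forces the chosen sets to partition the element indices, i.e.\ an exact cover. Conversely, given an exact cover $\mathcal{C}$, the coalition structure that groups $s_j$ with its corresponding $b_i$'s and its three $t_{j_k}$'s for $j\in\mathcal{C}$, isolates the other $s_j$'s and $t_{j_k}$'s into appropriate small coalitions, and places each $a_i$ as a singleton, is Nash-stable, as can be checked locally via the characterization in Proposition~\ref{prop:char-delta}.

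The main obstacle will be the case analysis verifying that the gadget preferences indeed make any non-cover structure unstable while maintaining the bounded degree. In particular, one must carefully order the friends and enemies within each ballot so that the directed distance $\delta$ penalizes precisely the ``wrong'' coalitions: the distance calculations must ensure that partial set-formation strictly incentivizes one of the players to deviate, and that a $b_i$ grouped in two different set coalitions (which is impossible anyway by partition, but the verification of deviation incentives at boundary configurations) yields a strict $\delta$-improvement. Finally, I would remark that the argument is insensitive to replacing the $1$-norm by the $2$-norm in the definition of $\delta$, since the deviation incentives only compare distances of a fixed player to two coalitions differing by single additions or removals, where both norms agree in sign on the difference of summed $\delta^+,\delta^-$ contributions.
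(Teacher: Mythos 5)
Your overall strategy matches the paper's: membership via the verification theorem, and NP-hardness by reducing from \textsc{Exact Cover by Three Sets} with the same gadgets and essentially the same ballots ($a_i$ liking $b_i$ and disliking the corresponding $s_j$'s; $b_i$ friendless and disliking $a_i$ and its positional $t_{j_k}$'s; $s_j$ liking its $b_i$'s and its three $t$'s; the $t_{j_k}$'s liking each other). However, your forward direction has a genuine flaw: you propose as witness the combined coalition $\{s_j\}\cup\{b_i\mid x_i\in S_j\}\cup\{t_{j_1},t_{j_2},t_{j_3}\}$ for each chosen set. This is not Nash-stable under your own ballots: each $b_i$ in it sits with the enemy $t_{j_k}$ corresponding to its position in $S_j$, so $\delta(\unrhd_{b_i},\Gamma(b_i))\geq 1$, while $b_i$ has no friends and hence achieves distance $0$ as a singleton (a deviation Nash stability permits). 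The correct witness separates the two groups: for $S_j$ in the cover take $\{s_j,b_{j_1},b_{j_2},b_{j_3}\}$ together with a separate cluster $\{t_{j_1},t_{j_2},t_{j_3}\}$, and for $S_j$ outside the cover take $\{s_j,t_{j_1},t_{j_2},t_{j_3}\}$; then $s_j$ is merely indifferent between three $b$-friends and three $t$-friends, $b_i$ meets no enemy, and each $a_i$ as a singleton is indifferent between staying alone (missing friend $b_i$, distance $1$) and joining $b_i$'s coalition (gaining $b_i$ but also the enemy $s_j$, again distance $1$). Your vaguer phrasing ``isolates the other $s_j$'s and $t_{j_k}$'s into appropriate small coalitions'' also hides an issue: if $s_j$ of an unchosen set were left alone, it would strictly improve by joining its $t$-cluster, so the grouping must be exactly as above.

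In the backward direction your incentive attribution is also off: the reason every $b_i$ must share a coalition with some corresponding $s_j$ is not that ``$b_i$ or $s_j$ strictly improves'' (for $b_i$ the $s_j$'s are neutral, and $b_i$ is already optimal alone); it is that otherwise $a_i$, whose only friend is $b_i$ and whose only enemies are the $s_j$'s with $x_i\in S_j$, would strictly improve by joining $b_i$'s coalition. The pressure forcing exactly three $b_i$'s with each chosen $s_j$ comes from $s_j$'s option to join its $t$-cluster (three friends), and the $t_{j_k}$'s must be together lest one of them deviate. With the witness corrected and these deviation arguments attributed to the right players, your reduction coincides with the paper's proof; your remarks on the degree bound, NP membership, and insensitivity to the $1$- versus $2$-norm are fine.
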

\begin{proof}
 The problem is contained in NP, since, for a chosen coalition structure, it can be verified whether it is Nash-stable in
 polynomial time in the number of players
 by Theorem~\ref{thm:Nash-verification}.
 
 For the lower bound, we consider a reduction from \textsc{Exact Cover by Three Sets}.
 We may assume that each element $x_1,\dots,x_n$ of an instance occurs in at most three sets $S_1,\dots,S_m$.
 Given such an instance, 
 we create a game with $2n+4m$ players ($a_i$, $b_i$ for each $x_i$, and $s_j$, $t_{j_1}$, $t_{j_2}$, $t_{j_3}$ for each $S_j$)
 in linear time
 with the following preferences
 for $a_i$, $b_i$, $s_j$, and $t_{j_k}$, respectively
 (see also Figure~\ref{fig:reduction-nash-stable-existence}).
\begin{figure}[ht]
    \centering
    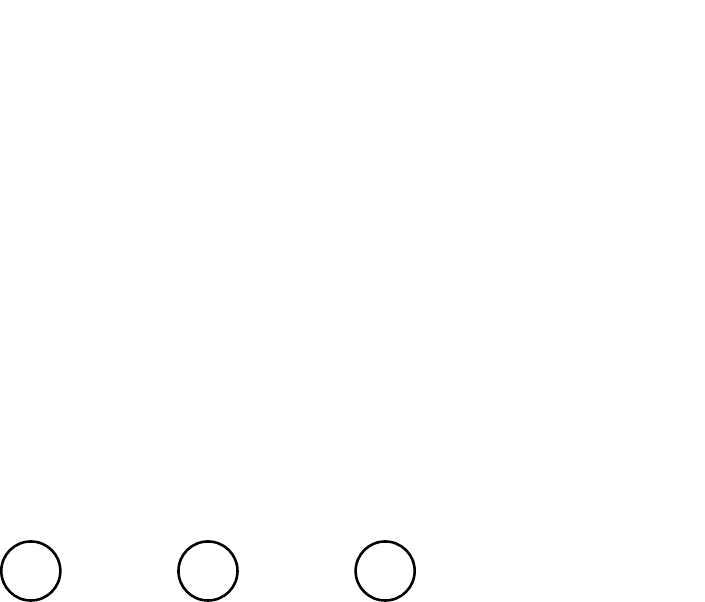
    \label{fig:reduction-nash-stable-existence}
    \caption{Polynomial reduction from \textsc{Exact Cover by Three Sets} to Nash stability existence: Gadget for one example set $S_1=\{x_1,x_2,x_3\}$. Solid (green) edges indicate a friendship relation, dashed (red) edges indicate dislike}
\end{figure}
 \begin{itemize}
  \item For each $x_i$ there are two players $a_i$ and $b_i$,
  \item for each $S_j$ there are four players $s_j$ and $t_{j_1}$, $t_{j_2}$ and $t_{j_3}$,
  \item $a_i$'s preferences are $\ballot{b_i}{a_i}{\{s_j\mid x_i\in S_j\}_\sim}$,
  \item $b_i$'s preferences are $\ballot{ }{b_i}{ \{a_i,t_{j_k}\mid x_i \text{ is the $k$th element in }S_j\}_\sim }$,
  \item $s_j$'s preferences are $\ballot{\{b_i,t_{j_1},t_{j_2},t_{j_3}\mid x_i\in S_j\}_\sim}{s_j}{ }$, and
  \item $t_{j_k}$'s preferences are $\ballot{\{t_{j_\ell}\mid 1\leq\ell\leq 3, \ell\neq k\}_\sim}{t_{j_k}}{ }$.
 \end{itemize}
 It holds that the original instance allows an exact cover if and only if there exists a Nash-stable coalition structure in the constructed game.
 If, on the one hand, an exact cover $C\subseteq\{S_1,\dots,S_m\}$ exists, the coalition structure
 $
  \{\{a_i\}\mid 1\leq i\leq n\}\cup\{\{s_j,b_{j_1},b_{j_2},b_{j_3}\mid S_j=\{x_{j_1},x_{j_2},x_{j_3}\}\},\{t_{j_1},t_{j_2},t_{j_3}\}\mid S_j\in C\}
  \cup \{\{s_j,t_{j_1},t_{j_2},t_{j_3}\}\}\mid S_j\notin C\}
 $
 is Nash-stable.
 In fact, no $a_i$ has an incentive to move, since they are indifferent between playing alone and playing with $b_i$, but also with a corresponding $s_j$.
 Each $b_i$ is as happy as possible, since they are not playing together with $a_i$ or any $t_{j_k}$.
 Moreover, each $s_j$ cannot improve, since $s_j$ is indifferent between $\{s_j,b_{j_1},b_{j_2},b_{j_3}\mid S_j=\{x_{j_1},x_{j_2},x_{j_3}\}$ and $\{s_j,t_{j_1},t_{j_2},t_{j_3}\}$.
 The players $t_{j_k}$
 don't want to move, as they are in one of their favourite coalitions.
 
 If, on the other hand, a stable coalition structure exists, we can observe the following: For each $j$, $1,\leq j\leq m$, all three $t_{j_k}$, $1\leq k\leq 3$ have to play in one coalition, otherwise there exists one $t_{j_k}$ that has an incentive to move to another player of that group.
 Each $b_i$, $1,\leq i\leq n$, cannot play together with $a_i$ or $t_{j_k}$, otherwise $b_i$ would move to the singleton coalition.
 Therefore, $s_j$ can only be in a coalition with at most three friends, either the group of $t_{j_k}$ players or corresponding $b_i$ players.
 If the latter are less than three $b_i$ players, $s_j$ moves to the former group.
 Thus, $s_j$'s coalition contains exactly three friends.
 If some $b_i$ does not play together with one corresponding $s_j$, then $a_i$ would move to $b_i$'s coalition.
 Hence, there exists an exact cover, namely by those $S_j$ for which $s_j$ plays together with those $b_i$ with $x_i\in S_j$.
\end{proof}

\section{Conclusion and Future Work}
We have developed a model of hedonic games based on distances between players ordinal preferences and a suitable representation of coalitions.
An axiomatic and computational study indicates that this model is a reasonable completion of compact hedonic game representations.
For future work,
it is interesting to study this model with respect to further stability concepts, such as the core.
Furthermore, we are interested in a comparison to a different preference model, e.g., relaxing the constant bound on known players.
Additionally, we can extend
the model
to other distance norms or adapt further ordinal distance measures.
%
%
%
%
Moreover, the distance-based approach enables a comparison
between
games, in scenarios which require a distance between games.

\printbibliography




\end{document}